\def\eqref#1{equation~\ref{#1}}
\def\1{\bm{1}}
\DeclareMathAlphabet{\mathsfit}{\encodingdefault}{\sfdefault}{m}{sl}
\SetMathAlphabet{\mathsfit}{bold}{\encodingdefault}{\sfdefault}{bx}{n}
\DeclareMathOperator*{\argmax}{arg\,max}
\DeclareMathOperator*{\argmin}{arg\,min}
\newtheorem{theorem}{Theorem}
\newtheorem{lemma}{Lemma}
\newcommand{\myparatight}[1]{\smallskip\noindent{\bf {#1}:}~}
\newcommand\CR[1]{\textcolor{black}{#1}}
\newcommand{\lnorm}[1]
    {\ensuremath{\left\Vert#1\right\Vert}}
\title{Almost Tight L0-norm Certified Robustness of Top-\emph{k} Predictions against Adversarial Perturbations}
\author{%
  Jinyuan Jia \\
  Duke University \\
  {\small \texttt{jinyuan.jia@duke.edu}} \\
  \And
  Binghui Wang \\
  Illinois Institute of Technology \\
  {\small \texttt{bwang70@iit.edu}} \\
  \And
  Xiaoyu Cao\\
  Duke University\\
  {\small \texttt{xiaoyu.cao@duke.edu}} \\
  \And
  Hongbin Liu\\
  Duke University\\
  {\small \texttt{hongbin.liu@duke.edu}} \\
  \And
  Neil Zhenqiang Gong \\
  Duke University \\
  {\small \texttt{neil.gong@duke.edu}} \\
}
\begin{document}

\maketitle

\begin{abstract}
\label{abstract}
Top-$k$ predictions are used in many real-world applications such as machine learning as a service, recommender systems, and web searches. $\ell_0$-norm adversarial perturbation characterizes an attack that arbitrarily modifies some features of an input such that a classifier makes an incorrect prediction for the perturbed input. $\ell_0$-norm adversarial perturbation is easy to interpret and can be implemented in the physical world. Therefore, certifying  robustness of top-$k$ predictions against $\ell_0$-norm adversarial perturbation is important. However, existing studies either focused on certifying $\ell_0$-norm robustness of top-$1$ predictions or  $\ell_2$-norm robustness of top-$k$ predictions. In this work, we aim to bridge the gap. Our approach is based on randomized smoothing, which builds a provably robust classifier from an arbitrary classifier via randomizing an input. Our major theoretical contribution is an almost tight $\ell_0$-norm certified robustness guarantee for top-$k$ predictions. We empirically evaluate our method on CIFAR10 and ImageNet. For instance, our method can build a classifier that achieves a certified top-3 accuracy of 69.2\% on ImageNet when an attacker can arbitrarily perturb 5 pixels of a testing image. 
\end{abstract}

\section{Introduction}
Adversarial example is a well-known severe security vulnerability of classifiers. Specifically, given a classifier $f$ and a testing input $\mathbf{x}$, an attacker can carefully craft a human-imperceptible perturbation $\delta$ such that $f(x)\neq f(x+\delta)$. The perturbation $\delta$ is called \emph{adversarial perturbation}, while the input $x+\delta$ is called an \emph{adversarial example}. Many \emph{empirical} defenses~\citep{goodfellow2014explaining,na2017cascade,metzen2017detecting,svoboda2018peernets,buckman2018thermometer,ma2018characterizing,guo2018countering,dhillon2018stochastic,xie2018mitigating,song2018pixeldefend} have been developed to defend against adversarial examples in the past several years. However, these empirical defenses were often soon broken by strong adaptive adversaries~\citep{carlini2017adversarial,athalye2018obfuscated,uesato2018adversarial,athalye2018robustness}. 
To end this cat-and-mouse game, many \emph{certified} defenses~\citep{scheibler2015towards,carlini2017provably,ehlers2017formal,katz2017reluplex,cheng2017maximum,lomuscio2017approach,fischetti2018deep,bunel2018unified,wong2017provable,wong2018scaling,raghunathan2018certified,raghunathan2018semidefinite,dvijotham2018training,dvijotham2018dual,gehr2018ai2,mirman2018differentiable,singh2018fast,weng2018towards,zhang2018efficient,gowal2018effectiveness,wang2018formal,lecuyer2018certified,li2018second,cohen2019certified,lee2019stratified,salman2019provably,wang2020certifying,jia2020certified,zhai2020macer} have been proposed. In particular, a classifier $f$ is said to be certifiably robust for an input $\mathbf{x}$ if it provably predicts the same top-1 label (i.e.,$f(x)=f(x+\delta)$) when the adversarial perturbation $\delta$ is bounded, e.g., the $\ell_p$-norm of $\delta$ is smaller than a threshold. The threshold is also called \emph{certified radius}. In this work, we focus on $\ell_0$-norm adversarial perturbation, which arbitrarily manipulates some features of a testing input and can be implemented in the physical world.  

However, most existing certified defenses focus on top-1 predictions. In many applications, top-$k$ predictions that return the $k$ most likely labels are more relevant. For instance, when a classifier is deployed as a cloud service (also called \emph{machine learning as a service})~\citep{Google_Cloud_Vision,microsoft_demo,amazon_demo,clarifai_demo}, top-$k$ labels for a testing input are often returned to a customer for more informed decisions; in recommender systems and web searches, top-$k$ items/webpages are recommended to a user. Despite the importance and relevance of top-$k$ predictions, their certified robustness against adversarial perturbations is largely  unexplored. One exception is the recent work from~\cite{jia2020certified}, which derived a tight  $\ell_2$-norm certified robustness for top-$k$ predictions. Such $\ell_2$-norm certified robustness can be transformed to $\ell_0$-norm certified robustness via employing the inequality between $\ell_0$-norm and $\ell_2$-norm. However, the $\ell_0$-norm certified robustness derived from such transformations is suboptimal.

\myparatight{Our work} We aim to develop $\ell_0$-norm certified robustness of top-$k$ predictions.
Our approach is based on  \emph{randomized smoothing}~\citep{cao2017mitigating,liu2018towards,lecuyer2018certified,li2018second,cohen2019certified,lee2019stratified,jia2020certified,levine2019robustness}, which can build a certifiably robust classifier from any base classifier via randomizing the input. We adopt randomized smoothing because it is applicable to {any} classifier and scalable to large neural networks. In particular, we use a randomized smoothing method called \emph{randomized ablation}~\citep{levine2019robustness}, which achieves state-of-the-art $\ell_{0}$-norm certified robustness for top-1 predictions. 
Unlike other randomized smoothing methods~\citep{cao2017mitigating,lecuyer2018certified,li2018second,cohen2019certified} that randomize an input via adding \emph{additive} noise (e.g., Gaussian, Laplacian, or discrete noise) to it, randomized ablation randomizes an input via subsampling its features. Specifically, 
given an arbitrary classifier (called \emph{base classifier}) and a testing input $\mathbf{x}$, randomized ablation creates an \emph{ablated input} via retaining some randomly selected features in $\mathbf{x}$ and setting the remaining features to a special value, e.g., median of the feature value, mean of the feature value, or a special symbol. When the testing input is an image, the features are the image's pixels. Then, we  feed the ablated input to the base classifier. Since the ablated input is random, the output of the base classifier is also random. Specifically, we denote by $p_j$ the probability that the base classifier outputs a label $j$ for the random ablated input.  
The original randomized ablation method builds a \emph{smoothed classifier} that outputs the label with the largest label probability $p_j$ for a testing input $\mathbf{x}$. In our work, the smoothed classifier returns the $k$ labels with the largest label probabilities for $\mathbf{x}$. 

Our major theoretical contribution is an almost tight $\ell_0$-norm certified robustness guarantee of top-$k$ predictions for the smoothed classifier constructed by randomized ablation. 
Specifically, we first derive an $\ell_0$-norm certified robustness guarantee of top-$k$ predictions for the smoothed classifier. 
Our results show that a label $l$ is provably among the top-$k$ labels predicted by the smoothed classifier for a testing input $\mathbf{x}$ when the attacker arbitrarily perturbs at most $r_{l}$ features of $\mathbf{x}$, where $r_{l}$ is the $\ell_0$-norm {certified radius}. Moreover, we prove that our certified radius is \emph{tight} when $k=1$ and is \emph{almost tight} when $k>1$.  In particular,  if no assumptions on the base classifier are made, it is impossible to derive a certified radius that is larger than $r_{l}+\mathbb{I}(k\neq 1)$. In other words, when an attacker manipulates at least $r_l+1+\mathbb{I}(k\neq 1)$ features of a testing input, there exists a base classifier from which the smoothed classifier's top-$k$ predicted labels do not include $l$ or there exist ties.

Our work has several technical differences with~\citet{levine2019robustness}. First, we derive the $\ell_0$-norm certified radius of top-$k$ predictions for randomized ablation, while \citet{levine2019robustness} only derived the certified radius of top-1 predictions. Second, our certified radius is the same as or larger than that in \citet{levine2019robustness} for top-1 predictions, because we leverage the discrete property of the label probabilities to derive our certified radius. Third,  we prove the (almost) tightness of the certified radius, while \citet{levine2019robustness} didn't. 
Our work also has several technical differences with~\citet{jia2020certified}, which derived a tight $\ell_2$-norm certified radius of top-$k$ predictions for randomized smoothing with Gaussian additive noise.  Since they add additive Gaussian noise to a testing input, the space of randomized inputs is continuous. However, our space of ablated inputs is discrete, as we randomize a testing input via subsampling its features. As a result, Jia et al. and our work use substantially different techniques to derive the $\ell_2$/$\ell_0$-norm certified radiuses and prove their (almost) tightness. In particular, when deriving the $\ell_2$/$\ell_0$-norm certified radiuses, our work needs to construct different regions in the discrete space of ablated inputs such that the Neyman-Pearson Lemma~\citep{neyman1933ix} can be applied. 
When proving the (almost) tightness, we use a completely different approach from Jia et al.. 
First, Jia et al. relies on the Intermediate Value Theorem, which is not applicable to our discrete data. Second, since Gaussian noise is not uniform, Jia et al. need to prove the results via Mathematical Induction. However, Mathematical Induction is unnecessary in our case because the ablated inputs that can be derived from an input are uniformly distributed in the space of ablated inputs.

We evaluate our method on CIFAR10 and ImageNet. Our results show that our method substantially outperforms state-of-the-art for top-$k$ predictions. For instance, our method achieves a certified top-3 accuracy of 69.2\% on ImageNet when an attacker arbitrarily perturbs 5 pixels of a testing image. Under the same setting,~\citet{jia2020certified} achieves a certified top-3 accuracy of only 9.0\%, when transforming their $\ell_2$-norm certified robustness to $\ell_0$-norm certified robustness. 

Our contributions can be summarized as follows:
\begin{itemize}
\item We derive an $\ell_0$-norm certified radius of top-$k$ predictions for randomized ablation.

\item  We prove that our certified radius is tight when $k=1$ and almost tight when $k>1$.

\item We empirically evaluate our method on CIFAR10 and ImageNet.

\end{itemize}
\section{Theoretical Results}
\label{theoretical_results}
In this section, we show our core theoretical contributions. 

\subsection{Building a smoothed classifier via randomized ablation} Suppose we have a base classifier $f$, which classifies a testing input $\mathbf{x}$ to one of $c$ classes $\{1,2,\cdots,c\}$ deterministically. For simplicity, we assume $\mathbf{x}$ is an image with $d$ pixels. 
Given an input $\mathbf{x}$, randomized ablation~\citep{levine2019robustness} creates an \emph{ablated input}  as follows: we first randomly subsample $e$ pixels from $\mathbf{x}$ without replacement and keep their values. 
Then, we set the remaining pixel values in the ablated input to a special value, e.g., median of the pixel value, mean of the pixel value, or a special symbol. When the image is a color image, we set the values of the three channels of each pixel separately.  Note that an ablated input has the same size with $\mathbf{x}$.  
We use $h(\mathbf{x},e)$ to denote the randomly ablated input for simplicity. Given $h(\mathbf{x},e)$ as input, the output of the base classifier $f$ is also random. We use $p_j$ to denote the probability that the base classifier $f$ predicts class $j$ when taking $h(\mathbf{x},e)$ as input, i.e., $p_j = \text{Pr}(f(h(\mathbf{x} ,e))=j)$. Note that $p_j$ is an integer multiple of $\frac{1}{{d \choose e}}$, which we will leverage to derive a tighter certified robustness guarantee. We  build a smoothed classifier $g$ that outputs the $k$ labels with the largest label probabilities $p_j$'s for  $\mathbf{x}$. Moreover, we denote by   $g_{k}(\mathbf{x})$  the set of $k$ labels predicted for  $\mathbf{x}$.  

\subsection{Deriving the certified radius for the smoothed classifier}
\noindent{\bf Defining two random variables:} \CR{Suppose an attacker adds a perturbation $\delta$ to an input $\mathbf{x}$, where $\lnorm{\delta}_0$ is the number of pixels perturbed by the attacker. We define the following two random variables:
\begin{align}
    U = h(\mathbf{x},e), V = h(\mathbf{x} + \delta,e), 
\end{align}
where the random variables $U$ and $V$ denote the ablated inputs derived from $\mathbf{x} $ and its perturbed version $\mathbf{x} + \delta$, respectively. We use $\mathcal{S}$ to denote the joint space of $U$ and $V$, i.e., $\mathcal{S}$ is the set of ablated inputs that can be derived from $\mathbf{x}$ or $\mathbf{x} + \delta$. Given the definition of $U$ and $V$, $\text{Pr}(f(U)=j)$ and $\text{Pr}(f(V)=j)$ respectively represent the label probabilities of the input $\mathbf{x}$ and its perturbed version $\mathbf{x} + \delta$ predicted by the smoothed classifier. }

\noindent{\bf Derivation goal:} \CR{Intuitively, an ablated input $h(\mathbf{x},e)$ is very likely to  not include any perturbed pixel if $\lnorm{\delta}_0$ is bounded and $e$ is relatively small, and thus the predicted labels of the smoothed classifier are not influenced by the perturbation.  
Formally, our goal is to show that a label $l\in \{1,2,\cdots,c\}$ is provably in the top-$k$ labels predicted by the smoothed classifier for an input $\mathbf{x}$ when the number of perturbed pixels is no larger than a threshold. In other words, we aim to show that $l \in g_{k}(\mathbf{x} +\delta)$ when $\lnorm{\delta}_0 \leq r_l$, where $r_l$ is the certified radius.  
Our key idea to derive the certified radius is to guarantee that, when taking $V$ as input, the label probability for label $l$ is larger than the smallest one among the label probabilities of any $k$ labels from all labels except $l$. We let $\Gamma=\{1,2,\cdots,c\}\setminus \{l\}$, i.e., $\Gamma$ denotes the set of all labels except $l$. We use $\Gamma_{k}$ to denote a set of $k$ labels in $\Gamma$. Then, we aim to find a maximum certified radius $r_l$ such that: 
\begin{align}
 \text{Pr}(f(V)=l)   > \max_{\Gamma_{k} \subset \Gamma} \min_{j \in \Gamma_{k}}  \text{Pr}(f(V)=j).  
\end{align}
Roughly speaking, the above equation means that: to ensure $l$ exists in the set of top-$k$ labels, $\text{Pr}(f(V)=l)$ should be larger than the minimum probability observed by taking any set of $k$ labels $\Gamma_{k}$ excluding $l$.
To reach the goal, we derive a lower bound of $\text{Pr}(f(V)=l)$ and an upper bound of $\max_{\Gamma_{k} \subset \Gamma} \min_{j \in \Gamma_{k}}  \text{Pr}(f(V)=j)$. In particular, we derive a lower bound and an upper bound using the probabilities that $V$ is in certain regions of the discrete space $\mathcal{S}$, and such probabilities can be efficiently computed for $\forall \lnorm{\delta}_0 = r$. Then, we can leverage binary search to find the maximum $r$ such that the lower bound is larger than the upper bound and treat the maximum $r$ as the certified radius $r_l$. Next, we respectively introduce how to derive lower and upper bounds and use them to compute certified radius. }

\myparatight{Deriving a lower bound of $\text{Pr}(f(V)=l)$ and an upper bound of $\max_{\Gamma_{k} \subset \Gamma} \min_{j \in \Gamma_{k}}  \text{Pr}(f(V)=j)$}We show our intuition to derive the upper and lower bounds. 
Our formal analysis is shown in the proof of Theorem~\ref{theorem_of_certified_robustness}. Our idea to derive the bounds is to divide the discrete space $\mathcal{S}$ in an innovative way such that we can leverage the Neyman-Pearson Lemma~\citep{neyman1933ix}.  
Suppose for the random variable $U$, we have a lower bound of the label probability for $l$ and an upper bound of the label probability for every other label. Formally, we have $\underline{p_l}, \overline{p}_1 \cdots \overline{p}_{l-1}, \overline{p}_{l}, \cdots, \overline{p}_c$ that satisfy the following: 
\begin{align}
\label{probability_condition}
\underline{p_l} \leq \text{Pr}(f(U)=l), \overline{p}_j \geq \text{Pr}(f(U)=j), \forall j \neq l, 
\end{align}
where $\underline{p}$ and $\overline{p}$ denote the lower and upper bounds of $p$, respectively. Multiplying each term in Equation~(\ref{probability_condition}) by ${d \choose e}$, we have $\underline{p_l} \cdot {d \choose e} \leq \text{Pr}(f(U)=l)\cdot {d \choose e}, \overline{p}_j \cdot {d \choose e}\geq \text{Pr}(f(U)=j)\cdot {d \choose e}, \forall j \neq l$. Since $p_l$ and $p_j (\forall j\neq l$) are integer multiples of $\frac{1}{{d \choose e}}$, we have  $\lceil \underline{p_l} \cdot {d \choose e} \rceil \leq \text{Pr}(f(U)=l)\cdot {d \choose e}, \lfloor \overline{p}_j \cdot {d \choose e}\rfloor \geq \text{Pr}(f(U)=j)\cdot {d \choose e}, \forall j \neq l$. Therefore, we have the following:
\begin{align}
\label{probability_condition_advance}
\underline{p^{\prime}_l} \triangleq \frac{\lceil \underline{p_l}\cdot{d \choose e} \rceil}{{d \choose e}} \leq \text{Pr}(f(U)=l), \overline{p}^{\prime}_j \triangleq \frac{\lfloor \overline{p}_j \cdot{d \choose e} \rfloor}{{d \choose e}}  \geq \text{Pr}(f(U)=j), \forall j \neq l.
\end{align}
Let $\overline{p}_{a_k} \geq \overline{p}_{a_{k-1}}\cdots \geq \overline{p}_{a_1}$ 
be the $k$ largest ones among $\{\overline{p}_1,\cdots,\overline{p}_{l-1},\overline{p}_{l+1},\cdots,\overline{p}_c\}$, where ties are broken uniformly at random. We denote $\Upsilon_{t} = \{a_1,a_2,\cdots, a_t\}$ as the set of $t$ labels with the smallest label probability upper bounds in the $k$ largest ones and denote by $\overline{p}^{\prime}_{\Upsilon_{t}} = \sum_{j \in \Upsilon_{t}}\overline{p}^{\prime}_{j}$ the sum of the $t$ label probability bounds, where $t=1, 2,\cdots, k$. 

We define regions $\mathcal{A}$, $\mathcal{B}$, and $\mathcal{C}$ in $\mathcal{S}$ as the sets of ablated inputs that can be derived only from $\mathbf{x}$, only from $\mathbf{x} + \delta$, and from both $\mathbf{x}$ and $\mathbf{x} + \delta$, respectively. In particular, the region $\mathcal{A}$ is a set of ablated inputs that can only be obtained via sampling $e$ features (pixels) from $\mathbf{x}$. The region $\mathcal{B}$ is a set of ablated inputs that can only be obtained via sampling $e$ features from the perturbed $\mathbf{x}+\delta$, and the region $\mathcal{C}$ is a set of ablated inputs that can  be obtained via sampling $e$ features from both  $\mathbf{x}$ and $\mathbf{x}+\delta$. 
Then, we can find a region $\mathcal{A}^{\prime} \subseteq \mathcal{C}$ such that $ \text{Pr}(U \in \mathcal{A}^{\prime}\cup \mathcal{A}) = \underline{p^{\prime}_l}$. Note that we assume we can find such a region $\mathcal{A}^{\prime}$ since we aim to find sufficient condition. Similarly, we can find $\mathcal{H}_{\Upsilon_{t}} \in \mathcal{C}$ such that we have $\text{Pr}(U \in \mathcal{H}_{\Upsilon_{t}}) =  \overline{p}^{\prime}_{\Upsilon_{t}} $. Then, we can apply the Neyman-Pearson Lemma~\citep{neyman1933ix} to derive a lower bound of $\text{Pr}(f(V)=l)$ and an upper bound of $\max_{\Gamma_{k} \subset \Gamma} \min_{j \in \Gamma_{k}}  \text{Pr}(f(V)=j)$ by leveraging the probabilities of $V$ in regions $\mathcal{A}^{\prime} \cup \mathcal{A}$ and $\mathcal{H}_{\Upsilon_{t}}\cup \mathcal{B}$. Formally, we have the following: 
\begin{align}
    \text{Pr}(f(V)=l) \geq \text{Pr}(V \in \mathcal{A}^{\prime} \cup \mathcal{A}),\ \max_{\Gamma_{k} \subset \Gamma} \min_{j \in \Gamma_{k}}  \text{Pr}(f(V)=j) \leq \min_{t=1}^{k} \frac{\text{Pr}(V \in \mathcal{H}_{\Upsilon_{t}}\cup \mathcal{B})}{t}. 
\end{align}

\noindent{\bf Computing certified radius:} Given the lower and upper bounds, we can find the maximum $r=\lnorm{\delta}_0$ such that the lower bound $\text{Pr}(V \in \mathcal{A}^{\prime} \cup \mathcal{A})$ is larger than the upper bound $\min_{t=1}^{k} \frac{\text{Pr}(V \in \mathcal{H}_{\Upsilon_{t}}\cup \mathcal{B})}{t}$. The maximum $r$ is the certified radius. Formally, we have the following theorem:  

\begin{theorem}[$\ell_0$-norm Certified Radius for Top-$k$ Predictions]
\label{theorem_of_certified_robustness}
Suppose we have an input $\mathbf{x}$ with $d$ features, a deterministic  base classifier $f$, an integer $e$, a smoothed classifier $g$ where $g_{k}(\mathbf{x})$ is a set of $k$ labels predicted for $\mathbf{x}$, an arbitrary label $l \in \{1,2,\cdots,c\}$, $\underline{p_l},\overline{p}_1,\cdots,\overline{p}_{l-1},\overline{p}_{l+1},\cdots,\overline{p}_c$ that satisfy Equation~(\ref{probability_condition}), and $\underline{p^{\prime}_l}$, $\overline{p}^{\prime}_j (\forall j \neq l)$ that are defined in Equation~(\ref{probability_condition_advance}). 
If the following optimization problem has a solution $r_l$:
\begin{align}
\label{equation_2_solve_to_find_rl}
    r_l = \argmax_{r \geq 0} r \ \ \  \text{ s.t. } \underline{p^{\prime}_l} -(1-\frac{{d-r \choose e}}{{d \choose e}}) > \min_{t=1}^{k} \frac{\overline{p}^{\prime}_{\Upsilon_t} +(1-\frac{{d-r \choose e}}{{d \choose e}})}{t}.
\end{align}
Then, we have the following: 
\begin{align}
    l \in g_{k}(\mathbf{x} +\delta), \forall \lnorm{\delta}_{0} \leq r_l.
\end{align}
%where $\nu = 1/{d \choose e}$. 
\end{theorem}
\begin{proof}
Please refer to Appendix~\ref{proof_of_theorem_of_certified_robustness}. 
\end{proof}

Next, we show that our derived certified radius is (almost) tight. In particular,  when using randomized ablation and no further assumptions are made on the base classifier, it is impossible to certify an  $\ell_0$-norm radius that is larger than $r_l + \mathbb{I}(k \neq 1)$ for top-$k$ predictions. 

\begin{theorem}[Almost Tightness of our Certified Radius]
\label{theorem_tightness_of_bound}
Assuming we have ${d-r_l-2 \choose e-1} \geq 1$, $\underline{p^{\prime}_l}   + \sum_{j \in \Upsilon_{k}} \overline{p}^{\prime}_{j} \leq 1$, and $\underline{p^{\prime}_l} + \sum_{j \neq l} \overline{p}^{\prime}_{j}  \geq 1$. \CR{If no assumption on the base classifier is made}, then, for any perturbation $\lnorm{\delta}_{0}  > r_l + \mathbb{I}(k\neq 1)$, there exists a base classifier $f^{*}$ consistent with Equation~(\ref{probability_condition}) but we have $l \notin g_{k}(\mathbf{x}+\delta)$ or there exist ties. 
\end{theorem}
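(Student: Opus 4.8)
The plan is to prove Theorem~\ref{theorem_tightness_of_bound} by an explicit adversarial construction: for any $\delta$ with $\lnorm{\delta}_0 = r > r_l + \mathbb{I}(k\neq 1)$, I would build a base classifier $f^{*}$ satisfying every inequality in Equation~(\ref{probability_condition}) for which label $l$ fails to be \emph{unambiguously} among the top-$k$ labels at $\mathbf{x}+\delta$, i.e.\ at least $k$ labels $j\ne l$ have $\text{Pr}(f^{*}(V)=j)\ge\text{Pr}(f^{*}(V)=l)$. As in the proof of Theorem~\ref{theorem_of_certified_robustness}, I work with the partition $\mathcal{S}=\mathcal{A}\cup\mathcal{B}\cup\mathcal{C}$ induced by $\delta$, using that $\lvert\mathcal{C}\rvert=\binom{d-r}{e}$, $\lvert\mathcal{A}\rvert=\lvert\mathcal{B}\rvert=\binom{d}{e}-\binom{d-r}{e}$, that $U$ is uniform on $\mathcal{A}\cup\mathcal{C}$ while $V$ is uniform on $\mathcal{B}\cup\mathcal{C}$, and writing $\beta(r)=1-\binom{d-r}{e}/\binom{d}{e}=\text{Pr}(U\in\mathcal{A})=\text{Pr}(V\in\mathcal{B})$, which is non-decreasing in $r$ and equals $1$ for $r$ large, so that $r_l$ is finite and $r_l+1$ is meaningful.

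The only structural fact I need about $r_l$ is its maximality. The constraint in Equation~(\ref{equation_2_solve_to_find_rl}) has a left-hand side decreasing in $\beta(r)$ and a right-hand side increasing in $\beta(r)$, so it holds on a downward-closed set of $r$; since $r_l$ is its maximum it fails at $r_l+1$, i.e.\ $\underline{p^{\prime}_l}-\beta(r_l+1)\le \min_{t=1}^{k}\frac{\overline{p}^{\prime}_{\Upsilon_t}+\beta(r_l+1)}{t}$. Monotonicity of $\beta$ then gives $\underline{p^{\prime}_l}-\beta(r)\le \frac{\overline{p}^{\prime}_{\Upsilon_t}+\beta(r)}{t}$ for every $t\le k$ and every $r\ge r_l+1$, and it is this family of inequalities — instantiated at the value of $t$ equal to the number of ``hard'' competitors in the construction — that drives feasibility.

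For the construction I allocate ablated inputs region by region. Label $l$ is put on ablated inputs so as to exhaust $\mathcal{A}$ first and spill only the remainder into $\mathcal{C}$; this realizes $\text{Pr}(f^{*}(U)=l)=\underline{p^{\prime}_l}$ (an integer multiple of $1/\binom{d}{e}$, hence compatible with Equation~(\ref{probability_condition})) while forcing $\text{Pr}(f^{*}(V)=l)=\max(0,\underline{p^{\prime}_l}-\beta(r))$, the smallest value possible. If $\underline{p^{\prime}_l}\le\beta(r)$ this is $0$ and any assignment giving $k$ other labels positive $V$-probability finishes the proof (here I need at least $k+1$ classes, else the statement is vacuous), so assume $\underline{p^{\prime}_l}>\beta(r)$. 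Then I take the $k$ largest-upper-bound labels $a_1,\dots,a_k$, give each $a_i$ its whole budget $\overline{p}^{\prime}_{a_i}$ inside $\mathcal{C}$, and for those $a_i$ whose budget is still below $\underline{p^{\prime}_l}-\beta(r)$ — necessarily a prefix $a_1,\dots,a_s$ of the ordering — add just enough extra mass from $\mathcal{B}$ to raise each to $\underline{p^{\prime}_l}-\beta(r)$, so that all of $a_1,\dots,a_k$ match or exceed $l$ under $V$. Feasibility reduces to three bookkeeping checks: the $\mathcal{B}$-mass used, $s(\underline{p^{\prime}_l}-\beta(r))-\overline{p}^{\prime}_{\Upsilon_s}$, must not exceed $\beta(r)$ — exactly the $t=s$ instance of the maximality inequality; the $\mathcal{C}$-mass used by $l$ and by $a_1,\dots,a_k$, namely $(\underline{p^{\prime}_l}-\beta(r))+\overline{p}^{\prime}_{\Upsilon_k}$, must not exceed $1-\beta(r)$ — exactly the hypothesis $\underline{p^{\prime}_l}+\sum_{j\in\Upsilon_k}\overline{p}^{\prime}_j\le 1$; and all leftover inputs in $\mathcal{A},\mathcal{B},\mathcal{C}$ must be assignable to the remaining labels without exceeding any $\overline{p}^{\prime}_j$ — guaranteed by $\underline{p^{\prime}_l}+\sum_{j\ne l}\overline{p}^{\prime}_j\ge 1$, which is precisely enough total upper-bound budget, while $\binom{d-r_l-2}{e-1}\ge 1$ keeps every sub-region manipulated here nonempty. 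This yields the claimed $f^{*}$.

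The main obstacle I anticipate is twofold. First, the purely discrete bookkeeping: every mass moved around must stay an exact multiple of $1/\binom{d}{e}$ and fit inside its region, so the three hypotheses must be shown to be \emph{exactly} the right feasibility conditions, not merely sufficient-looking ones. Second, and more delicate, is justifying the $\mathbb{I}(k\neq 1)$ term — why $k=1$ is exactly tight (the single-competitor allocation already goes through at $r_l+1$, matching the certified radius) whereas for $k>1$ the argument is carried out at $r_l+2$, the extra perturbed pixel being what gives the multi-competitor allocation across $\mathcal{B}$ the needed room; making this dichotomy precise, and separately checking that the construction works uniformly for every $\lnorm{\delta}_0$ strictly above the threshold (it does, since increasing $r$ only enlarges $\mathcal{B}$ and shrinks the $V$-mass $l$ is forced to retain), is where the care lies.
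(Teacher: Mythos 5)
Your construction is sound and delivers the theorem's conclusion, but it departs from the paper's proof at the one step that actually matters, and the departure buys you something. Both you and the paper split on whether $\underline{p^{\prime}_l}$ exceeds $\text{Pr}(U\in\mathcal{A})$, give label $l$ all of $\mathcal{A}$ plus a slice of $\mathcal{C}$ so that $\text{Pr}(f^{*}(V)=l)=\underline{p^{\prime}_l}-\beta(r)$ with $\beta(r)=1-{d-r \choose e}/{d \choose e}$, and give each competitor $a_i$ its full budget $\overline{p}^{\prime}_{a_i}$ inside $\mathcal{C}$. The divergence is in the $\mathcal{B}$-allocation: the paper equalizes the $w$ smallest competitors at the averaged level $\tau=(\overline{p}^{\prime}_{\Upsilon_w}+\beta)/w$, which need not be an integer multiple of $\nu=1/{d \choose e}$, so each competitor can only be placed somewhere in $[\tau-\nu,\tau]$; to force $\tau-\nu$ to still beat $\underline{p^{\prime}_l}-\beta$ the paper must pass from $r_l+1$ to $r_l+2$, and that is precisely where ${d-r_l-2 \choose e-1}\geq 1$ enters (it guarantees one extra $\nu$ of $\mathcal{B}$-mass), not to ``keep sub-regions nonempty'' as you suggest. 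You instead lift each deficient competitor to the exact level $\underline{p^{\prime}_l}-\beta(r)$, which is a lattice point of the discrete measure, so exact regions exist and no $\nu$-slack is ever needed; your feasibility condition is the $t=s$ instance of the negation of Equation~(\ref{equation_2_solve_to_find_rl}) at $r=r_l+1$, which indeed holds for every $t$ once the strict inequality fails, and is preserved for larger $r$ by monotonicity. The consequence --- which you do not seem to have noticed, since you still plan to ``carry out the $k>1$ argument at $r_l+2$'' --- is that your allocation already goes through at $\lnorm{\delta}_0=r_l+1$ for every $k$: it yields ties rather than strict domination, but the theorem's disjunct ``or there exist ties'' admits exactly that, just as the paper's own $k=1$ Case II does. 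So the ``delicate dichotomy'' you defer to the end dissolves; if written out carefully your argument proves the slightly sharper statement with the $\mathbb{I}(k\neq 1)$ term removed, at the cost of never exhibiting strict exclusion of $l$ (only ties), whereas the paper's $r_l+2$ construction makes every competitor strictly beat $l$.
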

\begin{proof}
Please refer to Appendix~\ref{proof_of_tightness_of_bound}. 
\end{proof}
\myparatight{Comparing with~\citet{levine2019robustness} when $k=1$} 
Our certified radius reduces to the maximum $r$ that satisfies $ \underline{p^{\prime}_l} -\overline{p}^{\prime}_{a_{1}} >  2 \cdot (1-\frac{{d-r \choose e}}{{d \choose e}})$ when $k=1$. In contrast, the certified radius in~\citet{levine2019robustness} is the maximum $r$ that satisfies $\underline{p_l} -\overline{p}_{a_{1}} >  2 \cdot (1-\frac{{d-r \choose e}}{{d \choose e}})$. Since  $ \underline{p^{\prime}_l} \geq \underline{p_l}$ and $\overline{p}^{\prime}_{a_{k}} \leq \overline{p}_{a_{k}}$, our certified radius is the same as or larger than that in~\citet{levine2019robustness}. Note that the method by Levine \& Feizi (2019) cannot be extended in a straightforward way to derive the certified robustness for top-$k$ predictions. The reason is that they consider each label independently in their derivation. In particular, they use the probability that the base classifier predicts label $i$ for an ablated clean input (i.e., $\text{Pr}(f(U) =i)$) to bound the probability that the base classifier predicts label $i$ for an ablated adversarial input (i.e., $\text{Pr}(f(V) =i)$). However, to derive the certified robustness for top-$k$ predictions, we need to jointly derive the bounds for multiple label probabilities (i.e., Equation (5)). Moreover, because of the difference in deriving label-probability bounds, our techniques are significantly different. In particular, Levine \& Feizi only need the law of total probability while we leverage Neyman-Pearson Lemma. Moreover,~\citet{levine2019robustness} did not analyze the tightness of the certified radius for top-1 predictions.

\myparatight{Comparing with~\citet{jia2020certified}}~\citet{jia2020certified} proved the exact tightness of their $\ell_2$-norm certified radius for randomized smoothing with Gaussian noise. 
We highlight that our techniques to prove our almost tightness are substantially different from those in Jia et al.. First, they proved the existence of a region via the Intermediate Value Theorem, which relies on the continuity of Gaussian noise. 
However, our space of ablated inputs is discrete. Therefore, given a probability upper/lower bound, it is challenging to find  a region whose probability measure exactly equals to the given value, since the Intermediate Value Theorem is not applicable. 
As a result, we cannot prove the exact tightness of the $\ell_0$-norm certified radius when $k>1$. To address the challenge, we  find a region whose probability measure is slightly smaller than the given upper bound, which enables us to prove the almost tightness of our certified radius. Second, since Gaussian noise is not uniform, they need to iteratively construct regions via leveraging Mathematical Induction. However, Mathematical Induction is unnecessary in our case because the ablated inputs that can be derived from an input are uniformly distributed in the space of ablated inputs.

\myparatight{Computing $r_l$ in practice} When applying our Theorem~\ref{theorem_of_certified_robustness} to calculate the certified radius $r_l$ in practice, we need the probability bounds $\underline{p^{\prime}_l}$ and  $\overline{p}^{\prime}_{\Upsilon_{t}}$ and  solve the optimization problem in Equation~(\ref{equation_2_solve_to_find_rl}). We can  leverage a Monte Carlo method developed by~\citet{jia2020certified} to estimate the probability bounds ($\underline{p_l}$ and $\overline{p}_j, \forall j \neq l$) with probabilistic guarantees. Then, we can use them to estimate $\underline{p^{\prime}_l}$ and  $\overline{p}^{\prime}_{\Upsilon_{t}}$.   Moreover, given the probability bounds $\underline{p^{\prime}_l}$ and $\overline{p}^{\prime}_{\Upsilon_{t}}$, we can use binary search to solve Equation~(\ref{equation_2_solve_to_find_rl}) to find the certified radius $r_l$.

Specifically, the probabilities 
%$p_1, p_2, \cdots, p_c$ 
$p_1, p_2, \cdots, p_{c}$ 
can be viewed as a multinomial distribution over the label set 
%$\{1,2,\cdots, c\}$. 
$\{1,2,\cdots, c\}$. 
Given $h(\mathbf{x},e)$ as input, $f(h(\mathbf{x},e))$ can be viewed as a sample from the multinomial distribution. Therefore,  estimating $\underline{p_l}$ and $\overline{p}_i$ for $i\neq l$ is essentially a one-sided \emph{simultaneous confidence interval} estimation problem. In particular, we leverage the simultaneous confidence interval estimation method  called SimuEM~\citep{jia2020certified} to estimate these bounds with a confidence level at least $1-\alpha$. 
Specifically, given an input $\mathbf{x}$ and parameter $e$, we randomly create $n$ ablated inputs, i.e., $\epsilon^1, \epsilon^2, \cdots,\epsilon^n$.
We denote by $n_j$ the frequency of the label $j$ predicted by the base classifier for the $n$ ablated inputs. 
Formally, we have  $n_j = \sum_{i=1}^n \mathbb{I}(f(\epsilon^i)=j)$, where $j \in \{1,2,\cdots,c\}$ and $\mathbb{I}$ is the indicator function. 
According to ~\citet{jia2020certified}, we have the following probability bounds with a confidence level at least $1-\alpha$:
\begin{align}
\label{compute_bound_simuem}
 & \underline{p_l }=B\left(\frac{\alpha}{c};n_l,n-n_l+1\right), \overline{p}_{j}=B(1-\frac{\alpha}{c};n_j+1, n-n_j), \  \forall j \neq l,
\end{align}
where $B(q; \xi,\zeta)$ is the $q$th quantile of a beta distribution with shape parameters $\xi$ and $\zeta$.
%Moreover, 
Then, we can compute $\underline{p^{\prime}_l}$ and $\overline{p}^{\prime}_j , \forall j \neq l$ based on Equation~(\ref{probability_condition_advance}). Finally, we estimate $\overline{p}^{\prime}_{\Upsilon_{t}}$ as $ \overline{p}^{\prime}_{\Upsilon_{t}}=\min(\sum_{j \in \Upsilon_{t}}\overline{p}^{\prime}_{j}, 1 - \underline{p^{\prime}_l})$.

\section{Evaluation}
\subsection{Experimental Setup}
\vspace{-2mm}
\myparatight{Datasets and models} We use CIFAR10~\citep{krizhevsky2009learning} and ImageNet~\citep{deng2009imagenet} for evaluation. We normalize pixel values to be in the range [0,1]. We use the publicly available implementation\footnote{https://github.com/alevine0/randomizedAblation/} of randomized ablation to train our models. 
In particular, we use ResNet-110 and RestNet-50 as the base classifiers for CIFAR10 and ImageNet, respectively. Moreover, as in ~\citet{lee2019stratified}, we use 500  testing examples for both CIFAR10 and ImageNet.

\myparatight{Parameter setting} Unless otherwise mentioned, we adopt the following default parameters. We set $e=50$ and $e=1,000$ for CIFAR10 and ImageNet, respectively. We set $k=3$, $n=100,000$, and $\alpha = 0.001$. We will study the impact of each parameter while fixing the remaining ones to their default values.

 \myparatight{Evaluation metric} We use the  \emph{certified top-$k$ accuracy} as an evaluation metric. 
Specifically,  given a number of perturbed pixels, certified top-$k$ accuracy is the fraction of testing images, whose true labels have $\ell_0$-norm certified radiuses for top-$k$ predictions that are no smaller than the given number of perturbed pixels. Note that our $\ell_0$-norm certified radius corresponds to the maximum number of pixels that can be perturbed by an attacker.

\myparatight{Compared methods}
We compare six randomized smoothing based methods. The first four are only applicable for top-1 predictions, while the latter two are applicable for top-$k$ predictions. 

\begin{itemize}

\item {\bf\citet{cohen2019certified}}. This method adds Gaussian noise to a testing image and derives a tight $\ell_2$-norm certified radius for top-1 predictions. In particular, considering the three color channels and each pixel value is normalized to be in the range [0,1], an  $\ell_0$-norm certified  number of perturbed pixels $r_l$  can be obtained from an $\ell_2$-norm certified radius $\sqrt{3r_l}$.  

\item {\bf \citet{lee2019stratified}}. This method derives an $\ell_0$-norm certified radius for top-1 predictions. This method is applicable to discrete features. Like~\citet{lee2019stratified},  we treat the pixel values as discrete in the domain $\{0, 1/256, 2/256, \cdots, 255/256\}$. Since their $\ell_0$-norm certified radius is for pixel channels (each pixel has 3 color channels),  a certified  number of perturbed pixels $r_l$  can be obtained from their $\ell_0$-norm certified radius ${3r_l}$.  

\item {\bf \citet{levine2019robustness}}. This method derives an $\ell_0$-norm certified number of perturbed pixels for top-1 predictions in randomized ablation. This method requires a lower bound of the largest label probability and an upper bound of the second largest label probability to calculate the certified number of perturbed pixels. They estimated the lower bound  using the Monte Carlo method in~\citet{cohen2019certified} and the upper bound as 1 - the lower bound. Note that our certified radius is theoretically no smaller than that in \citet{levine2019robustness} when $k=1$. Therefore, we use our derived certified radius when evaluating this method. We also found that the top-1 certified accuracies based on our derived certified radius and their derived certified radius have negligible differences on CIFAR10 and ImageNet, and thus we do not show the differences for simplicity. 

\item {\bf \citet{levine2019robustness} + SimuEM~\citep{jia2020certified}}. This is the~\citet{levine2019robustness} method with the lower/upper bounds of label probabilities estimated using the simultaneous confidence interval estimation method called SimuEM. Again, we use our derived certified radius for top-1 predictions in this method.

\item {\bf \citet{jia2020certified}}. This work extends \citet{cohen2019certified} from top-1 predictions to top-$k$ predictions. In detail, they derive a tight  $\ell_2$-norm certified radius of top-$k$ predictions for randomized smoothing with Gaussian noise. An  $\ell_0$-norm certified  number of perturbed pixels $r_l$ for top-$k$ predictions can be obtained from an $\ell_2$-norm certified radius $\sqrt{3r_l}$.  

\item {\bf Our method}. Our method produces an almost tight $\ell_0$-norm certified number of perturbed pixels of top-$k$ predictions.

\end{itemize}

\CR{Note that we compare with~\cite{cohen2019certified} and~\cite{jia2020certified} because we aim to show that it is suboptimal to derive $\ell_0$-norm certified robustness for top-$k$ predictions by leveraging the relationship between $\ell_2$-norm and $\ell_0$-norm.}

\subsection{Experimental Results}

\vspace{-2mm}
\myparatight{Comparison results}
Table~\ref{cmp_defense_cifar10} and~\ref{comp_defense_imagenet} respectively show the certified top-$k$ accuracies of the compared methods on CIFAR10 and ImageNet when an attacker perturbs a certain number of pixels. 
The Gaussian noise in~\cite{cohen2019certified} and~\cite{jia2020certified} has mean 0 and standard deviation $\sigma$. We obtain the certified top-$k$ accuracies for different $\sigma$, i.e., we explored $\sigma=0.1, 0.12, 0.25, 0.5, 1.0$. 
~\cite{lee2019stratified} has a noise parameter $\beta$. We obtain the certified top-1 accuracies for different  $\beta$. In particular, we explored $\beta=0.1, 0.2, 0.3, 0.4, 0.5$, which were also used by  \cite{lee2019stratified}.  
Then, we report the largest certified top-$k$ accuracies of ~\cite{cohen2019certified},~\cite{lee2019stratified}, and ~\cite{jia2020certified} for each given number of perturbed pixels. 
We use the default values of  $e$ for ~\cite{levine2019robustness} and our method.

 \begin{table*}[bt]\renewcommand{\arraystretch}{1.2}
\centering
\caption{Certified top-$k$ accuracies of the compared methods on CIFAR10.}
%\addtolength{\tabcolsep}{+1pt}
{
\small
\begin{tabular}{|c|c|c|c|c|c|c|}
\hline
\multicolumn{2}{|c|}{ \#Perturbed pixels} & { 1} & { 2} & { 3} & { 4} & { 5} \\ \hline
 \multirow{6}{*} { { Certified  top-1 accuracy}} & { \cite{cohen2019certified}} & 0.118 & 0.056 & 0.018 & 0.0 & 0.0 \\ \cline{2-7} 
 & { \cite{lee2019stratified}} & 0.188 & 0.018 & 0.004 & 0.002 & 0.0 \\ \cline{2-7} 
 & { \cite{levine2019robustness}}& 0.704 & 0.680 & 0.670 & 0.646 & 0.610 \\ \cline{2-7} 
 & {\makecell{ \bf\cite{levine2019robustness} \\ { \bf +  SimuEM~\citep{jia2020certified}}}} & {\bf 0.746} & {\bf 0.718} & {\bf 0.690} & {\bf 0.660} & {\bf 0.636} \\ \cline{2-7} 
 & {\bf Our method} & {\bf 0.746} & {\bf 0.718} & {\bf 0.690} & {\bf 0.660} & {\bf 0.636} \\ \hline \hline
\multirow{2}{*} { Certified  top-3 accuracy} & { \cite{jia2020certified}} & 0.244 &	0.124	& 0.070	& 0.028	& 0.004
 \\ \cline{2-7} 
 & {\bf Our method} & {\bf 0.886}	 & {\bf 0.860} & {\bf 0.838} & {\bf 0.814} & {\bf 0.780} \\ \hline 
\end{tabular}}
\label{cmp_defense_cifar10}
\vspace{-1mm}
\end{table*}

 \begin{table*}[bt]\renewcommand{\arraystretch}{1.2}
\centering
\caption{Certified top-$k$ accuracies of the compared methods on ImageNet.}
%\addtolength{\tabcolsep}{+1pt}
{
\small
\begin{tabular}{|c|c|c|c|c|c|c|}
\hline
\multicolumn{2}{|c|}{ \#Perturbed pixels} & { 1} & { 2} & { 3} & { 4} & { 5} \\ \hline
 \multirow{6}{*} { { Certified  top-1 accuracy}} & { \cite{cohen2019certified}} & 0.226 & 0.152 & 0.120 & 0.088 & 0.0  \\ \cline{2-7} 
 & { \cite{lee2019stratified}} & 0.338 & 0.196 & 0.104 & 0.092 & 0.070 \\ \cline{2-7} 
 & { \cite{levine2019robustness}}& 0.602 & 0.600 & 0.596 & 0.588 & 0.586 \\ \cline{2-7} 
 & {\makecell{ \bf\cite{levine2019robustness} \\ { \bf +  SimuEM~\citep{jia2020certified}}}} & {\bf 0.634}  & {\bf 0.628} & {\bf 0.618} & {\bf 0.616} & {\bf 0.608} \\ \cline{2-7} 
 & {\bf Our method} & {\bf 0.634}  & {\bf 0.628} & {\bf 0.618} & {\bf 0.616} & {\bf 0.608} \\ \hline \hline
\multirow{2}{*} { Certified  top-3 accuracy} & { \cite{jia2020certified}} & 0.326 &	0.232	& 0.160	& 0.120	& 0.090
 \\ \cline{2-7} 
 & {\bf Our method} &  {\bf 0.740}  & {\bf 0.730} & {\bf 0.712} & {\bf 0.698} & {\bf 0.692} \\ \hline 
\end{tabular}}
\label{comp_defense_imagenet}
\vspace{-5mm}
\end{table*}

We have two observations from Table~\ref{cmp_defense_cifar10} and~\ref{comp_defense_imagenet}. 
First, our method substantially outperforms~\citet{jia2020certified} for top-$k$ predictions, while~\citet{levine2019robustness} substantially outperforms~\citet{cohen2019certified} and~\citet{lee2019stratified} for top-$1$ predictions. Since our method and~\citet{levine2019robustness} use randomized ablation, while the remaining methods use additive noise (Gaussian or discrete noise) to randomize a testing input,  our results indicate that randomized ablation is superior to additive noise at certifying $\ell_0$-norm robustness. 
Second,~\citet{levine2019robustness} + SimuEM~\citep{jia2020certified} outperforms~\citet{levine2019robustness}. This is because SimuEM can more accurately estimate the label probability bounds via simultaneous confidence interval estimations.

\begin{figure*}[!t]
%\vspace{-2mm}
\centering
\subfloat[CIFAR10]{\includegraphics[width=0.32\textwidth]{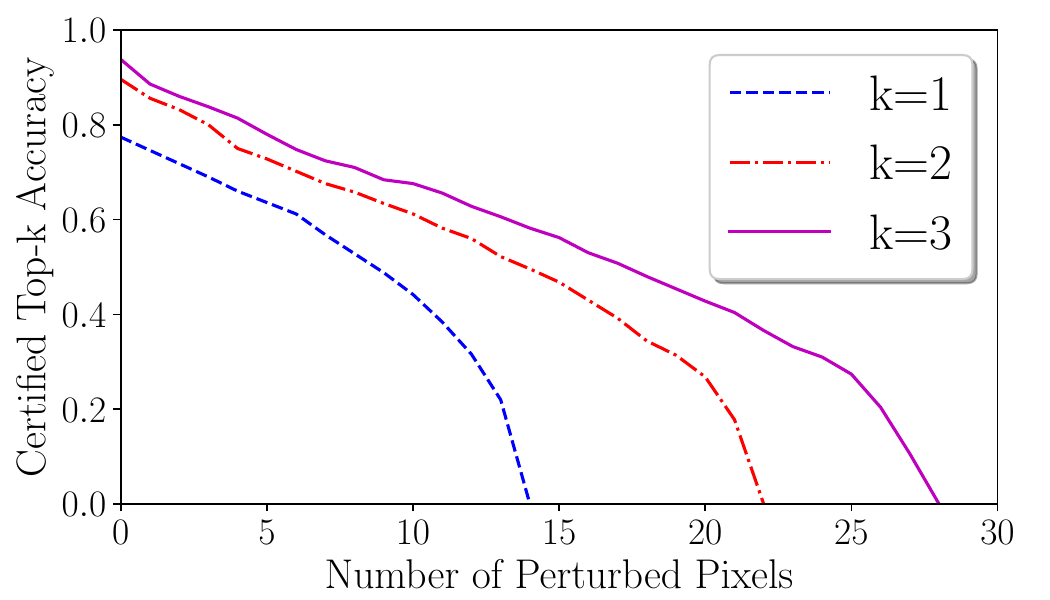}}
\subfloat[ImageNet]{\includegraphics[width=0.32\textwidth]{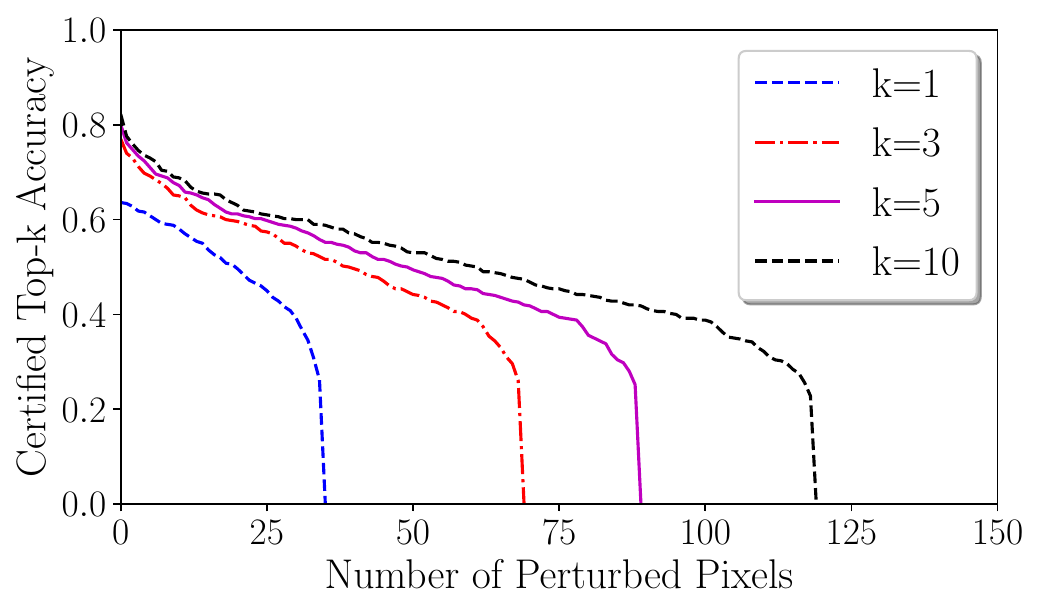}}
\caption{Impact of $k$ on  certified top-$k$ accuracy.}
\label{image_perturbation_topk}
\vspace{-6mm}
\end{figure*}

\begin{figure*}[!t]
% \vspace{-2mm}
\centering
\subfloat[CIFAR10]{\includegraphics[width=0.32\textwidth]{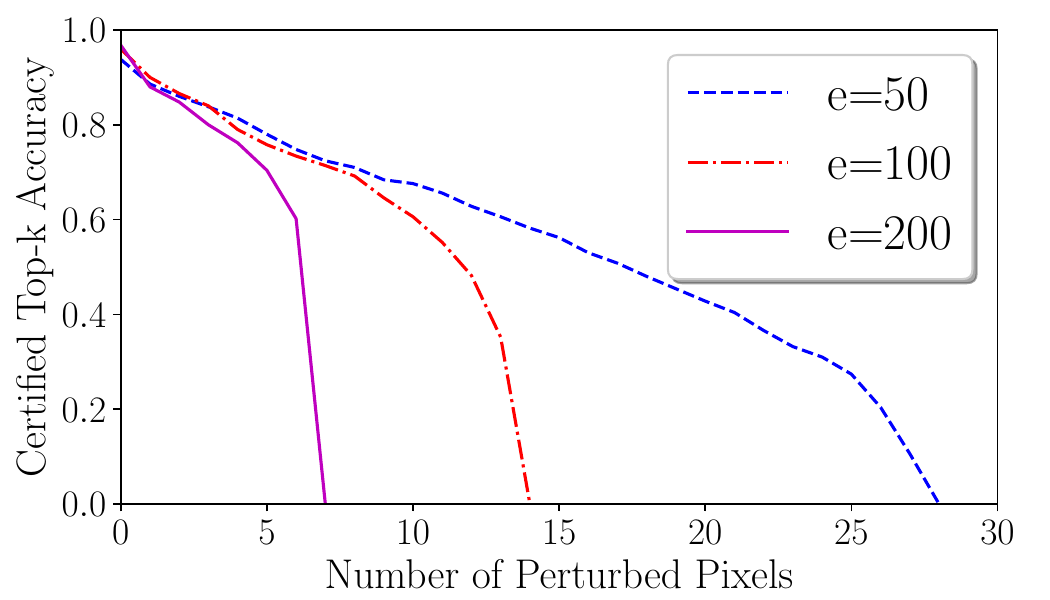}}
\subfloat[ImageNet]{\includegraphics[width=0.32\textwidth]{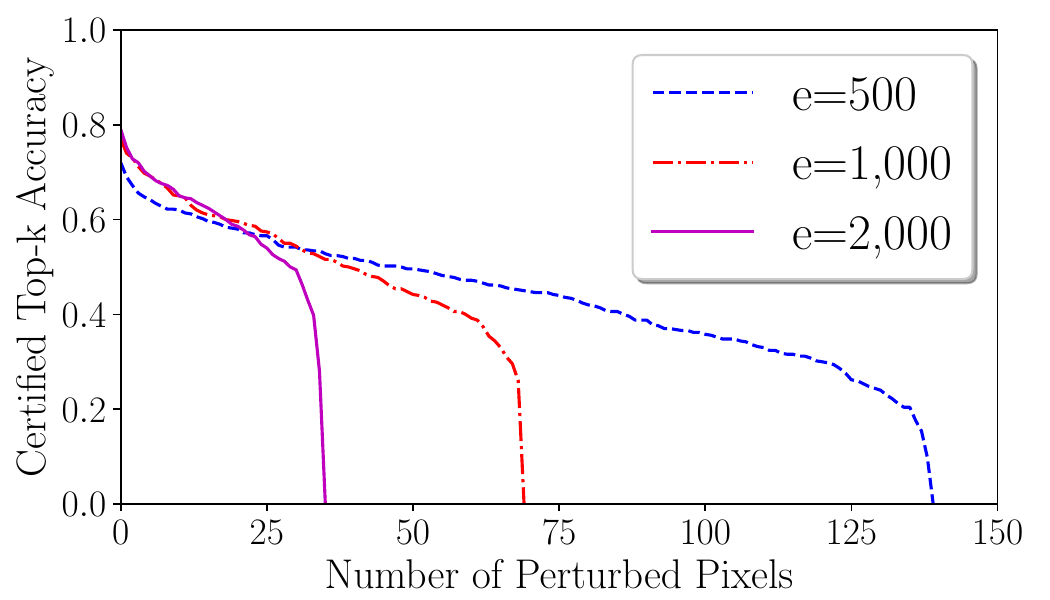}}
\caption{Impact of $e$ on certified top-$3$ accuracy.}
\label{image_perturbation_e}
\vspace{-6mm}
%\vspace{-3mm}
\end{figure*}

\begin{figure*}[!t]
\centering
\subfloat[CIFAR10]{\includegraphics[width=0.32\textwidth]{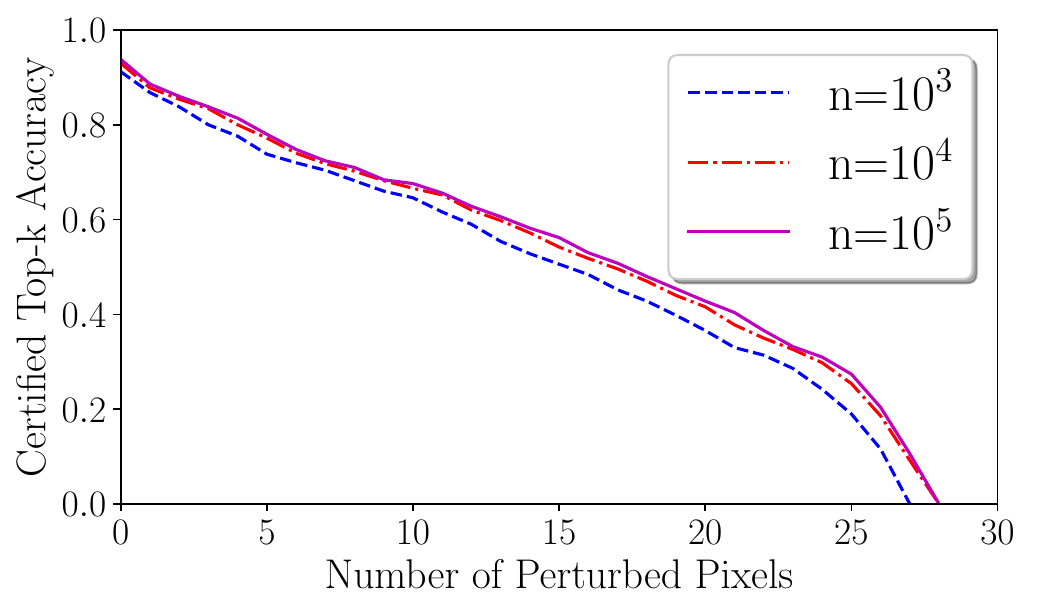}}
\subfloat[ImageNet]{\includegraphics[width=0.32\textwidth]{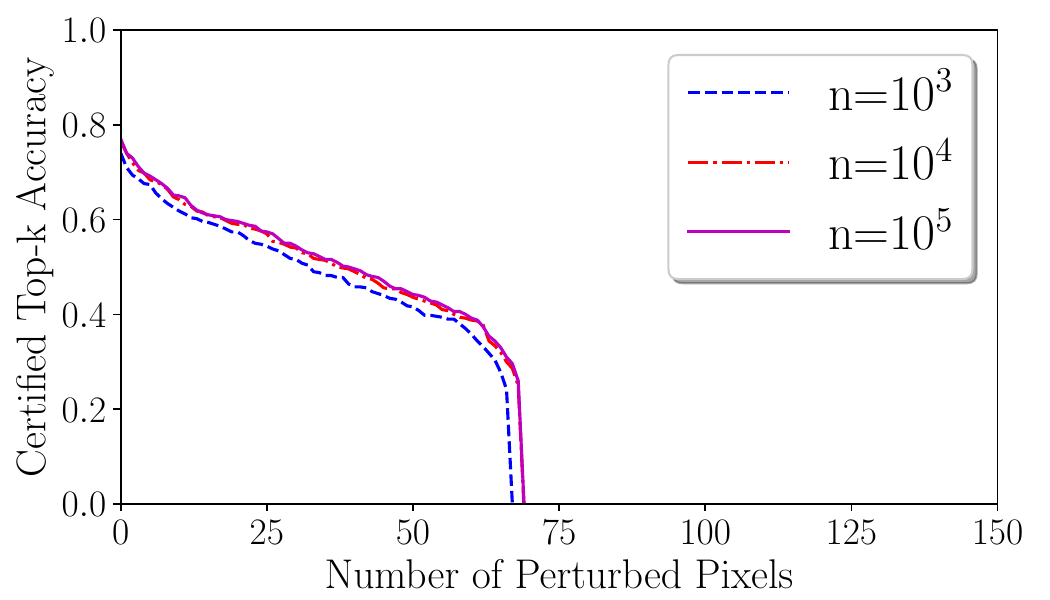}}
\caption{Impact of $n$ on  certified top-$3$ accuracy.}
\label{image_perturbation_N}
\vspace{-6mm}
\end{figure*}

\begin{figure*}[!t]
\centering
\subfloat[CIFAR10]{\includegraphics[width=0.32\textwidth]{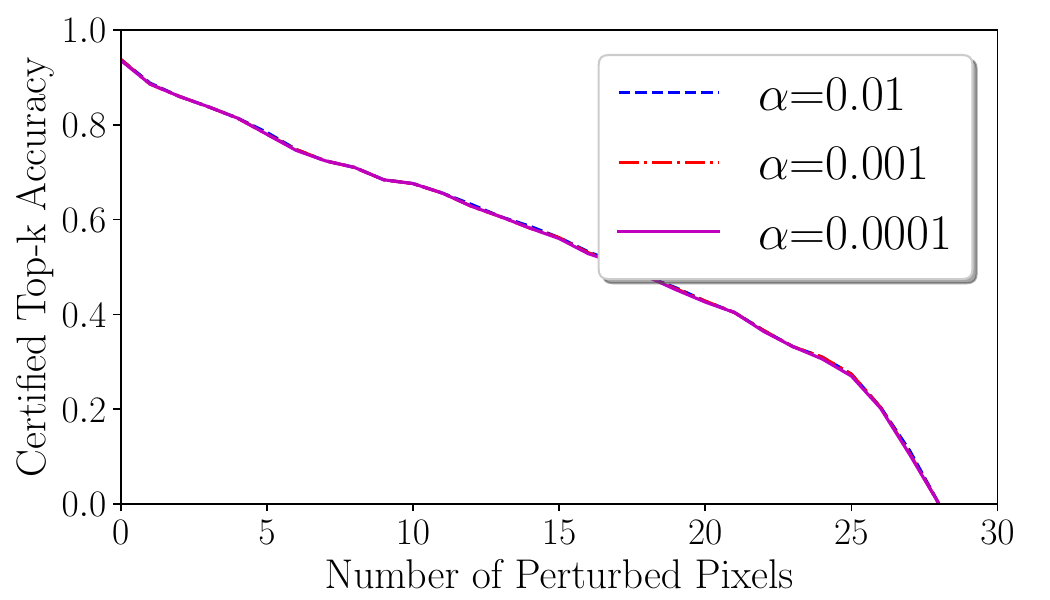}}
\subfloat[ImageNet]{\includegraphics[width=0.32\textwidth]{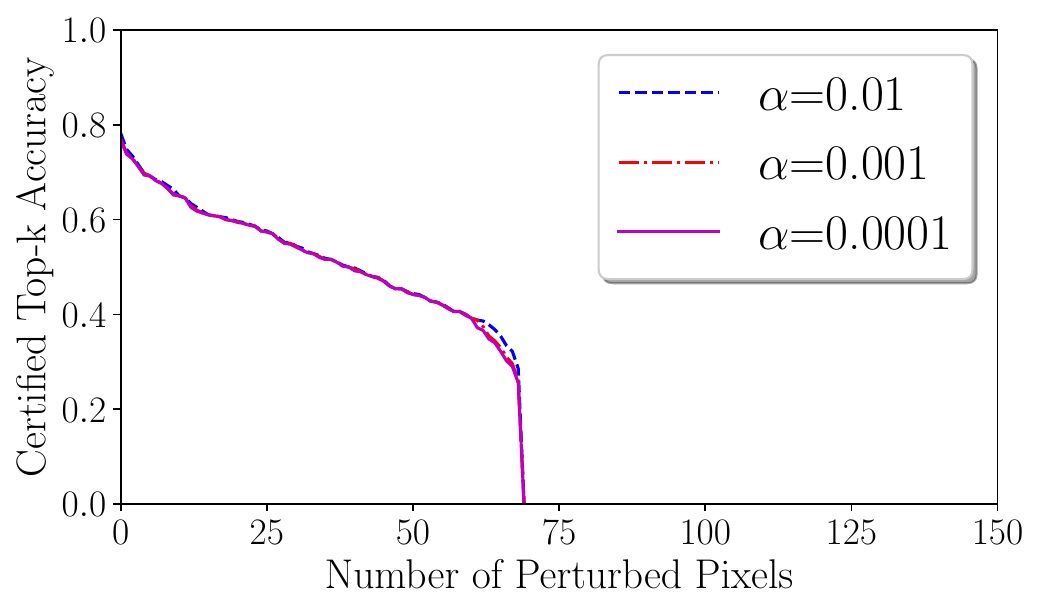}}
\caption{Impact of $\alpha$ on certified top-$3$ accuracy.}
\label{image_perturbation_alpha}
\vspace{-6mm}
\end{figure*}

\vspace{-2mm}
\myparatight{Impact of $k$, $e$, $n$, and $\alpha$}
Figure~\ref{image_perturbation_topk},~\ref{image_perturbation_e},~\ref{image_perturbation_N} and~\ref{image_perturbation_alpha} show the certified top-$k$ accuracy of our method vs. number of perturbed pixels  for different $k$, $e$, $n$, and $\alpha$, respectively.  
Naturally, the certified top-$k$ accuracy increases as $k$ increases. For instance, when  5 pixels are perturbed, the certified top-1 and top-3 accuracies are 63.6\% and 78.0\%
on CIFAR10, respectively.  
We observe that $e$ provides a tradeoff between accuracy under no attacks and robustness. Specifically, when $e$ is larger, the accuracy under no attacks (i.e., certified accuracy with 0 perturbed pixels) is higher, while the certified accuracy decreases to 0 more quickly as the number of perturbed pixels increases.     
As $n$ becomes larger, the curve of the certified accuracy may become higher. The reason is that a larger $n$  makes the estimated label probability bounds  $\underline{p_l}$ and  $\overline{p}_{\Upsilon_t}$ tighter and thus the $\ell_0$-norm certified radius may be larger, which result in a larger certified accuracy.
Theoretically, as the confidence level $1-\alpha$ decreases, the curve of the certified accuracy may become higher. 
This is because a smaller confidence level leads to tighter estimated label probability bounds $\underline{p_l}$ and  $\overline{p}_{\Upsilon_t}$, and thus the certified accuracy may be larger.  
However, we observe the differences between different confidence levels are negligible when the confidence levels are high enough (i.e., $\alpha$ is small enough).

\section{Related Work}
\label{related}

Many certified defenses have been proposed to defend against adversarial perturbations. 
These defenses leverage various techniques including satisfiability modulo theories~\citep{scheibler2015towards,carlini2017provably,ehlers2017formal,katz2017reluplex}, interval analysis~\citep{wang2018formal}, linear programming~\citep{cheng2017maximum,lomuscio2017approach,fischetti2018deep,bunel2018unified,wong2017provable,wong2018scaling}, semidefinite programming~\citep{raghunathan2018certified,raghunathan2018semidefinite}, dual optimization~\citep{dvijotham2018training,dvijotham2018dual}, abstract interpretation~\citep{gehr2018ai2,mirman2018differentiable,singh2018fast}, and layer-wise relaxation~\citep{weng2018towards,zhang2018efficient,gowal2018effectiveness,chiang2020certified}. 
However, these defenses suffer from one or two limitations: 1) they are not scalable to large neural networks and/or 2) they are only applicable to specific neural network architectures.   Randomized smoothing addresses the two limitations. Next, we review  randomized smoothing based methods for certifying non-$\ell_0$-norm and $\ell_0$-norm robustness. 

\vspace{-2mm}
\myparatight{Randomized smoothing for non-$\ell_0$-norm robustness}  
Randomized smoothing was first proposed as an empirical defense~\citep{cao2017mitigating,liu2018towards}. 
In particular,~\citet{cao2017mitigating} proposed to use uniform random noise from a hypercube centered at a testing example to smooth its predicted label. ~\citet{lee2019stratified} derived certified robustness for such uniform random noise. 
~\citet{lecuyer2018certified} was the first to derive formal $\ell_2$ and  $\ell_\infty$-norm robustness guarantee of randomized smoothing with  Gaussian or Laplacian noise
 via differential privacy techniques.   
Subsequently,~\citet{li2018second} leveraged information theory to derive a tighter $\ell_2$-norm robustness guarantee. 
~\citet{cohen2019certified} leveraged the Neyman-Pearson Lemma~\citep{neyman1933ix} to obtain a tight $\ell_2$-norm certified robustness guarantee for randomized smoothing with Gaussian noise. 
Other studies include~\citet{pinot2019theoretical,carmon2019unlabeled,salman2019provably,zhai2020macer,dvijotham2019framework,blum2020random,levine2020wasserstein,kumar2020curse,yang2020randomized,zhang2020black,salman2020black,zheng2020towards}.
All these studies focused on top-1 predictions.
~\citet{jia2020certified} derived the first $\ell_2$-norm certified robustness of top-$k$ predictions against adversarial perturbations for randomized smoothing with Gaussian noise and proved its tightness.

\vspace{-2mm}
\myparatight{Randomized smoothing for $\ell_0$-norm robustness} All the above randomized smoothing based provable defenses were not (specifically) designed to certify $\ell_0$-norm robustness.
They can be transformed to $\ell_0$-norm robustness via leveraging the relationship between $\ell_p$ norms. However, such transformations lead to suboptimal $\ell_0$-norm certified robustness. 
In response, multiple studies~\citep{lee2019stratified,levine2019robustness,dvijotham2019framework,bojchevski2020efficient,jia2020certified,zhang2021backdoor,wang2021certified,liu2021pointguard} proposed new randomized smoothing schemes to certify $\ell_0$-norm robustness. For instance,~\citet{lee2019stratified} derived an $\ell_0$-norm certified robustness for classifiers with discrete features using randomized smoothing. In particular, for each feature, they keep its value with a certain probability and change it to a random value in the feature domain with an equal probability.~\citet{levine2019robustness} proposed randomized ablation, which achieves state-of-the-art  $\ell_0$-norm certified robustness. However, their work focused on top-1 predictions and they did not analyze the tightness of the certified robustness guarantee for top-1 predictions. We derive an almost tight $\ell_0$-norm certified robustness guarantee of top-$k$ predictions for randomized ablation.

\section{Conclusion}
\label{conclusion}

In this work, we derive an almost tight $\ell_0$-norm certified robustness guarantee of top-$k$ predictions against adversarial perturbations for randomized ablation. We show that a label $l$ is provably among the top-$k$ labels predicted by a classifier smoothed by randomized ablation for a testing input when an attacker arbitrarily modifies a bounded number of features of the testing input. Moreover, we prove our derived bound is almost tight. Our empirical results show that our $\ell_0$-norm certified robustness is substantially better than those transformed from $\ell_2$-norm certified robustness. 
Interesting future works include exploring other noise to certify $\ell_0$-norm robustness for top-$k$ predictions and incorporating the information of the base classifier to derive larger certified radiuses.

\subsubsection*{Acknowledgments}
We thank the anonymous reviewers for insightful reviews. This work was supported by the National Science Foundation under Grants No. 1937786 and 2125977, as well as the Army Research Office under Grant No. W911NF2110182.

\bibliography{refs}

\begin{thebibliography}{70}
\providecommand{\natexlab}[1]{#1}
\providecommand{\url}[1]{\texttt{#1}}
\expandafter\ifx\csname urlstyle\endcsname\relax
  \providecommand{\doi}[1]{doi: #1}\else
  \providecommand{\doi}{doi: \begingroup \urlstyle{rm}\Url}\fi

\bibitem[{Amazon AWS}()]{amazon_demo}
{Amazon AWS}.
\newblock \url{https://aws.amazon.com/rekognition/}.
\newblock April 2021.

\bibitem[Athalye \& Carlini(2018)Athalye and Carlini]{athalye2018robustness}
Anish Athalye and Nicholas Carlini.
\newblock On the robustness of the cvpr 2018 white-box adversarial example
  defenses.
\newblock \emph{arXiv}, 2018.

\bibitem[Athalye et~al.(2018)Athalye, Carlini, and
  Wagner]{athalye2018obfuscated}
Anish Athalye, Nicholas Carlini, and David Wagner.
\newblock Obfuscated gradients give a false sense of security: Circumventing
  defenses to adversarial examples.
\newblock In \emph{ICML}, 2018.

\bibitem[Blum et~al.(2020)Blum, Dick, Manoj, and Zhang]{blum2020random}
Avrim Blum, Travis Dick, Naren Manoj, and Hongyang Zhang.
\newblock Random smoothing might be unable to certify $\ell_{\infty} $
  robustness for high-dimensional images.
\newblock \emph{arXiv preprint arXiv:2002.03517}, 2020.

\bibitem[Bojchevski et~al.(2020)Bojchevski, Klicpera, and
  G{\"u}nnemann]{bojchevski2020efficient}
Aleksandar Bojchevski, Johannes Klicpera, and Stephan G{\"u}nnemann.
\newblock Efficient robustness certificates for discrete data: Sparsity-aware
  randomized smoothing for graphs, images and more.
\newblock In \emph{ICML}, 2020.

\bibitem[Buckman et~al.(2018)Buckman, Roy, Raffel, and
  Goodfellow]{buckman2018thermometer}
Jacob Buckman, Aurko Roy, Colin Raffel, and Ian Goodfellow.
\newblock Thermometer encoding: One hot way to resist adversarial examples.
\newblock In \emph{ICLR}, 2018.

\bibitem[Bunel et~al.(2018)Bunel, Turkaslan, Torr, Kohli, and
  Mudigonda]{bunel2018unified}
Rudy~R Bunel, Ilker Turkaslan, Philip Torr, Pushmeet Kohli, and Pawan~K
  Mudigonda.
\newblock A unified view of piecewise linear neural network verification.
\newblock In \emph{NeurIPS}, 2018.

\bibitem[Cao \& Gong(2017)Cao and Gong]{cao2017mitigating}
Xiaoyu Cao and Neil~Zhenqiang Gong.
\newblock Mitigating evasion attacks to deep neural networks via region-based
  classification.
\newblock In \emph{ACSAC}, 2017.

\bibitem[Carlini \& Wagner(2017)Carlini and Wagner]{carlini2017adversarial}
Nicholas Carlini and David Wagner.
\newblock Adversarial examples are not easily detected: Bypassing ten detection
  methods.
\newblock In \emph{AISec}, 2017.

\bibitem[Carlini et~al.(2017)Carlini, Katz, Barrett, and
  Dill]{carlini2017provably}
Nicholas Carlini, Guy Katz, Clark Barrett, and David~L Dill.
\newblock Provably minimally-distorted adversarial examples.
\newblock \emph{arXiv}, 2017.

\bibitem[Carmon et~al.(2019)Carmon, Raghunathan, Schmidt, Duchi, and
  Liang]{carmon2019unlabeled}
Yair Carmon, Aditi Raghunathan, Ludwig Schmidt, John~C Duchi, and Percy~S
  Liang.
\newblock Unlabeled data improves adversarial robustness.
\newblock In \emph{Advances in Neural Information Processing Systems}, pp.\
  11192--11203, 2019.

\bibitem[Cheng et~al.(2017)Cheng, N{\"u}hrenberg, and Ruess]{cheng2017maximum}
Chih-Hong Cheng, Georg N{\"u}hrenberg, and Harald Ruess.
\newblock Maximum resilience of artificial neural networks.
\newblock In \emph{ATVA}, 2017.

\bibitem[Chiang et~al.(2020)Chiang, Ni, Abdelkader, Zhu, Studer, and
  Goldstein]{chiang2020certified}
Ping-yeh Chiang, Renkun Ni, Ahmed Abdelkader, Chen Zhu, Christoph Studer, and
  Tom Goldstein.
\newblock Certified defenses for adversarial patches.
\newblock \emph{arXiv preprint arXiv:2003.06693}, 2020.

\bibitem[{Clarifai}()]{clarifai_demo}
{Clarifai}.
\newblock \url{https://www.clarifai.com/demo}.
\newblock April 2021.

\bibitem[Cohen et~al.(2019)Cohen, Rosenfeld, and Kolter]{cohen2019certified}
Jeremy~M Cohen, Elan Rosenfeld, and J~Zico Kolter.
\newblock Certified adversarial robustness via randomized smoothing.
\newblock In \emph{ICML}, 2019.

\bibitem[Deng et~al.(2009)Deng, Dong, Socher, Li, Li, and
  Fei-Fei]{deng2009imagenet}
Jia Deng, Wei Dong, Richard Socher, Li-Jia Li, Kai Li, and Li~Fei-Fei.
\newblock Imagenet: A large-scale hierarchical image database.
\newblock In \emph{CVPR}, 2009.

\bibitem[Dhillon et~al.(2018)Dhillon, Azizzadenesheli, Lipton, Bernstein,
  Kossaifi, Khanna, and Anandkumar]{dhillon2018stochastic}
Guneet~S Dhillon, Kamyar Azizzadenesheli, Zachary~C Lipton, Jeremy Bernstein,
  Jean Kossaifi, Aran Khanna, and Anima Anandkumar.
\newblock Stochastic activation pruning for robust adversarial defense.
\newblock In \emph{ICLR}, 2018.

\bibitem[Dvijotham et~al.(2018{\natexlab{a}})Dvijotham, Gowal, Stanforth,
  Arandjelovic, O'Donoghue, Uesato, and Kohli]{dvijotham2018training}
Krishnamurthy Dvijotham, Sven Gowal, Robert Stanforth, Relja Arandjelovic,
  Brendan O'Donoghue, Jonathan Uesato, and Pushmeet Kohli.
\newblock Training verified learners with learned verifiers.
\newblock \emph{arXiv}, 2018{\natexlab{a}}.

\bibitem[Dvijotham et~al.(2018{\natexlab{b}})Dvijotham, Stanforth, Gowal, Mann,
  and Kohli]{dvijotham2018dual}
Krishnamurthy Dvijotham, Robert Stanforth, Sven Gowal, Timothy~A Mann, and
  Pushmeet Kohli.
\newblock A dual approach to scalable verification of deep networks.
\newblock In \emph{UAI}, 2018{\natexlab{b}}.

\bibitem[Dvijotham et~al.(2019)Dvijotham, Hayes, Balle, Kolter, Qin, Gyorgy,
  Xiao, Gowal, and Kohli]{dvijotham2019framework}
Krishnamurthy~Dj Dvijotham, Jamie Hayes, Borja Balle, Zico Kolter, Chongli Qin,
  Andras Gyorgy, Kai Xiao, Sven Gowal, and Pushmeet Kohli.
\newblock A framework for robustness certification of smoothed classifiers
  using f-divergences.
\newblock In \emph{International Conference on Learning Representations}, 2019.

\bibitem[Ehlers(2017)]{ehlers2017formal}
Ruediger Ehlers.
\newblock Formal verification of piece-wise linear feed-forward neural
  networks.
\newblock In \emph{ATVA}, 2017.

\bibitem[Fischetti \& Jo(2018)Fischetti and Jo]{fischetti2018deep}
Matteo Fischetti and Jason Jo.
\newblock Deep neural networks and mixed integer linear optimization.
\newblock \emph{Constraints}, 2018.

\bibitem[Gehr et~al.(2018)Gehr, Mirman, Drachsler-Cohen, Tsankov, Chaudhuri,
  and Vechev]{gehr2018ai2}
Timon Gehr, Matthew Mirman, Dana Drachsler-Cohen, Petar Tsankov, Swarat
  Chaudhuri, and Martin Vechev.
\newblock Ai2: Safety and robustness certification of neural networks with
  abstract interpretation.
\newblock In \emph{IEEE S \& P}, 2018.

\bibitem[Goodfellow et~al.(2015)Goodfellow, Shlens, and
  Szegedy]{goodfellow2014explaining}
Ian~J Goodfellow, Jonathon Shlens, and Christian Szegedy.
\newblock Explaining and harnessing adversarial examples.
\newblock In \emph{ICLR}, 2015.

\bibitem[{Google Cloud Vision}()]{Google_Cloud_Vision}
{Google Cloud Vision}.
\newblock \url{https://cloud.google.com/vision/}.
\newblock April 2021.

\bibitem[Gowal et~al.(2018)Gowal, Dvijotham, Stanforth, Bunel, Qin, Uesato,
  Mann, and Kohli]{gowal2018effectiveness}
Sven Gowal, Krishnamurthy Dvijotham, Robert Stanforth, Rudy Bunel, Chongli Qin,
  Jonathan Uesato, Timothy Mann, and Pushmeet Kohli.
\newblock On the effectiveness of interval bound propagation for training
  verifiably robust models.
\newblock \emph{arXiv}, 2018.

\bibitem[Guo et~al.(2018)Guo, Rana, Cisse, and Van
  Der~Maaten]{guo2018countering}
Chuan Guo, Mayank Rana, Moustapha Cisse, and Laurens Van Der~Maaten.
\newblock Countering adversarial images using input transformations.
\newblock In \emph{ICLR}, 2018.

\bibitem[Jia et~al.(2020)Jia, Wang, Cao, and Gong]{jia2020certified}
Jinyuan Jia, Binghui Wang, Xiaoyu Cao, and Neil~Zhenqiang Gong.
\newblock Certified robustness of community detection against adversarial
  structural perturbation via randomized smoothing.
\newblock In \emph{WWW}, 2020.

\bibitem[Katz et~al.(2017)Katz, Barrett, Dill, Julian, and
  Kochenderfer]{katz2017reluplex}
Guy Katz, Clark Barrett, David~L Dill, Kyle Julian, and Mykel~J Kochenderfer.
\newblock Reluplex: An efficient smt solver for verifying deep neural networks.
\newblock In \emph{CAV}, 2017.

\bibitem[Krizhevsky et~al.(2009)Krizhevsky, Hinton,
  et~al.]{krizhevsky2009learning}
Alex Krizhevsky, Geoffrey Hinton, et~al.
\newblock Learning multiple layers of features from tiny images.
\newblock Technical report, Citeseer, 2009.

\bibitem[Kumar et~al.(2020)Kumar, Levine, Goldstein, and Feizi]{kumar2020curse}
Aounon Kumar, Alexander Levine, Tom Goldstein, and Soheil Feizi.
\newblock Curse of dimensionality on randomized smoothing for certifiable
  robustness.
\newblock In \emph{ICML}, 2020.

\bibitem[Lecuyer et~al.(2019)Lecuyer, Atlidakis, Geambasu, Hsu, and
  Jana]{lecuyer2018certified}
Mathias Lecuyer, Vaggelis Atlidakis, Roxana Geambasu, Daniel Hsu, and Suman
  Jana.
\newblock Certified robustness to adversarial examples with differential
  privacy.
\newblock In \emph{IEEE S \& P}, 2019.

\bibitem[Lee et~al.(2019)Lee, Yuan, Chang, and Jaakkola]{lee2019stratified}
Guang-He Lee, Yang Yuan, Shiyu Chang, and Tommi~S Jaakkola.
\newblock Tight certificates of adversarial robustness for randomly smoothed
  classifiers.
\newblock In \emph{NeurIPS}, 2019.

\bibitem[Levine \& Feizi(2019)Levine and Feizi]{levine2019robustness}
Alexander Levine and Soheil Feizi.
\newblock Robustness certificates for sparse adversarial attacks by randomized
  ablation.
\newblock \emph{arXiv preprint arXiv:1911.09272}, 2019.

\bibitem[Levine \& Feizi(2020)Levine and Feizi]{levine2020wasserstein}
Alexander Levine and Soheil Feizi.
\newblock Wasserstein smoothing: Certified robustness against wasserstein
  adversarial attacks.
\newblock In \emph{International Conference on Artificial Intelligence and
  Statistics}, pp.\  3938--3947. PMLR, 2020.

\bibitem[Li et~al.(2019)Li, Chen, Wang, and Carin]{li2018second}
Bai Li, Changyou Chen, Wenlin Wang, and Lawrence Carin.
\newblock Second-order adversarial attack and certifiable robustness.
\newblock In \emph{NeurIPS}, 2019.

\bibitem[Liu et~al.(2021)Liu, Jia, and Gong]{liu2021pointguard}
Hongbin Liu, Jinyuan Jia, and Neil~Zhenqiang Gong.
\newblock Pointguard: Provably robust 3d point cloud classification.
\newblock In \emph{CVPR}, 2021.

\bibitem[Liu et~al.(2018)Liu, Cheng, Zhang, and Hsieh]{liu2018towards}
Xuanqing Liu, Minhao Cheng, Huan Zhang, and Cho-Jui Hsieh.
\newblock Towards robust neural networks via random self-ensemble.
\newblock In \emph{ECCV}, 2018.

\bibitem[Lomuscio \& Maganti(2017)Lomuscio and Maganti]{lomuscio2017approach}
Alessio Lomuscio and Lalit Maganti.
\newblock An approach to reachability analysis for feed-forward relu neural
  networks.
\newblock \emph{arXiv}, 2017.

\bibitem[Ma et~al.(2018)Ma, Li, Wang, Erfani, Wijewickrema, Schoenebeck, Song,
  Houle, and Bailey]{ma2018characterizing}
Xingjun Ma, Bo~Li, Yisen Wang, Sarah~M Erfani, Sudanthi Wijewickrema, Grant
  Schoenebeck, Dawn Song, Michael~E Houle, and James Bailey.
\newblock Characterizing adversarial subspaces using local intrinsic
  dimensionality.
\newblock In \emph{ICLR}, 2018.

\bibitem[Metzen et~al.(2017)Metzen, Genewein, Fischer, and
  Bischoff]{metzen2017detecting}
Jan~Hendrik Metzen, Tim Genewein, Volker Fischer, and Bastian Bischoff.
\newblock On detecting adversarial perturbations.
\newblock In \emph{ICLR}, 2017.

\bibitem[{Microsoft}()]{microsoft_demo}
{Microsoft}.
\newblock \url{https://aidemos.microsoft.com/computer-vision}.
\newblock April 2021.

\bibitem[Mirman et~al.(2018)Mirman, Gehr, and Vechev]{mirman2018differentiable}
Matthew Mirman, Timon Gehr, and Martin Vechev.
\newblock Differentiable abstract interpretation for provably robust neural
  networks.
\newblock In \emph{ICML}, 2018.

\bibitem[Na et~al.(2018)Na, Ko, and Mukhopadhyay]{na2017cascade}
Taesik Na, Jong~Hwan Ko, and Saibal Mukhopadhyay.
\newblock Cascade adversarial machine learning regularized with a unified
  embedding.
\newblock In \emph{ICLR}, 2018.

\bibitem[Neyman \& Pearson(1933)Neyman and Pearson]{neyman1933ix}
Jerzy Neyman and Egon~Sharpe Pearson.
\newblock Ix. on the problem of the most efficient tests of statistical
  hypotheses.
\newblock \emph{Philosophical Transactions of the Royal Society of London.
  Series A, Containing Papers of a Mathematical or Physical Character},
  231\penalty0 (694-706):\penalty0 289--337, 1933.

\bibitem[Pinot et~al.(2019)Pinot, Meunier, Araujo, Kashima, Yger, Gouy-Pailler,
  and Atif]{pinot2019theoretical}
Rafael Pinot, Laurent Meunier, Alexandre Araujo, Hisashi Kashima, Florian Yger,
  C{\'e}dric Gouy-Pailler, and Jamal Atif.
\newblock Theoretical evidence for adversarial robustness through
  randomization.
\newblock In \emph{NeurIPS}, 2019.

\bibitem[Qi et~al.(2017)Qi, Su, Mo, and Guibas]{qi2017pointnet}
Charles~R Qi, Hao Su, Kaichun Mo, and Leonidas~J Guibas.
\newblock Pointnet: Deep learning on point sets for 3d classification and
  segmentation.
\newblock In \emph{Proceedings of the IEEE conference on computer vision and
  pattern recognition}, pp.\  652--660, 2017.

\bibitem[Raghunathan et~al.(2018{\natexlab{a}})Raghunathan, Steinhardt, and
  Liang]{raghunathan2018certified}
Aditi Raghunathan, Jacob Steinhardt, and Percy Liang.
\newblock Certified defenses against adversarial examples.
\newblock In \emph{ICLR}, 2018{\natexlab{a}}.

\bibitem[Raghunathan et~al.(2018{\natexlab{b}})Raghunathan, Steinhardt, and
  Liang]{raghunathan2018semidefinite}
Aditi Raghunathan, Jacob Steinhardt, and Percy~S Liang.
\newblock Semidefinite relaxations for certifying robustness to adversarial
  examples.
\newblock In \emph{NeurIPS}, 2018{\natexlab{b}}.

\bibitem[Salman et~al.(2019)Salman, Li, Razenshteyn, Zhang, Zhang, Bubeck, and
  Yang]{salman2019provably}
Hadi Salman, Jerry Li, Ilya Razenshteyn, Pengchuan Zhang, Huan Zhang, Sebastien
  Bubeck, and Greg Yang.
\newblock Provably robust deep learning via adversarially trained smoothed
  classifiers.
\newblock In \emph{NeurIPS}, 2019.

\bibitem[Salman et~al.(2020)Salman, Sun, Yang, Kapoor, and
  Kolter]{salman2020black}
Hadi Salman, Mingjie Sun, Greg Yang, Ashish Kapoor, and J~Zico Kolter.
\newblock Black-box smoothing: A provable defense for pretrained classifiers.
\newblock \emph{arXiv preprint arXiv:2003.01908}, 2020.

\bibitem[Scheibler et~al.(2015)Scheibler, Winterer, Wimmer, and
  Becker]{scheibler2015towards}
Karsten Scheibler, Leonore Winterer, Ralf Wimmer, and Bernd Becker.
\newblock Towards verification of artificial neural networks.
\newblock In \emph{MBMV}, 2015.

\bibitem[Singh et~al.(2018)Singh, Gehr, Mirman, P{\"u}schel, and
  Vechev]{singh2018fast}
Gagandeep Singh, Timon Gehr, Matthew Mirman, Markus P{\"u}schel, and Martin
  Vechev.
\newblock Fast and effective robustness certification.
\newblock In \emph{NeurIPS}, 2018.

\bibitem[Song et~al.(2018)Song, Kim, Nowozin, Ermon, and
  Kushman]{song2018pixeldefend}
Yang Song, Taesup Kim, Sebastian Nowozin, Stefano Ermon, and Nate Kushman.
\newblock Pixeldefend: Leveraging generative models to understand and defend
  against adversarial examples.
\newblock In \emph{ICLR}, 2018.

\bibitem[Svoboda et~al.(2019)Svoboda, Masci, et~al.]{svoboda2018peernets}
Jan Svoboda, Jonathan Masci, et~al.
\newblock Peernets: Exploiting peer wisdom against adversarial attacks.
\newblock In \emph{ICLR}, 2019.

\bibitem[Uesato et~al.(2018)Uesato, O'Donoghue, Kohli, and
  Oord]{uesato2018adversarial}
Jonathan Uesato, Brendan O'Donoghue, Pushmeet Kohli, and Aaron Oord.
\newblock Adversarial risk and the dangers of evaluating against weak attacks.
\newblock In \emph{ICML}, 2018.

\bibitem[Wang et~al.(2020)Wang, Cao, Jia, Gong, et~al.]{wang2020certifying}
Binghui Wang, Xiaoyu Cao, Jinyuan Jia, Neil~Zhenqiang Gong, et~al.
\newblock On certifying robustness against backdoor attacks via randomized
  smoothing.
\newblock \emph{CVPR 2020 Workshop on Adversarial Machine Learning in Computer
  Vision}, 2020.

\bibitem[Wang et~al.(2021)Wang, Jia, Cao, and Gong]{wang2021certified}
Binghui Wang, Jinyuan Jia, Xiaoyu Cao, and Neil~Zhenqiang Gong.
\newblock Certified robustness of graph neural networks against adversarial
  structural perturbation.
\newblock In \emph{KDD}, 2021.

\bibitem[Wang et~al.(2018)Wang, Pei, Whitehouse, Yang, and
  Jana]{wang2018formal}
Shiqi Wang, Kexin Pei, Justin Whitehouse, Junfeng Yang, and Suman Jana.
\newblock Formal security analysis of neural networks using symbolic intervals.
\newblock In \emph{USENIX Security Symposium}, 2018.

\bibitem[Weng et~al.(2018)Weng, Zhang, Chen, Song, Hsieh, Boning, Dhillon, and
  Daniel]{weng2018towards}
Tsui-Wei Weng, Huan Zhang, Hongge Chen, Zhao Song, Cho-Jui Hsieh, Duane Boning,
  Inderjit~S Dhillon, and Luca Daniel.
\newblock Towards fast computation of certified robustness for relu networks.
\newblock In \emph{ICML}, 2018.

\bibitem[Wong \& Kolter(2018)Wong and Kolter]{wong2017provable}
Eric Wong and J~Zico Kolter.
\newblock Provable defenses against adversarial examples via the convex outer
  adversarial polytope.
\newblock In \emph{ICML}, 2018.

\bibitem[Wong et~al.(2018)Wong, Schmidt, Metzen, and Kolter]{wong2018scaling}
Eric Wong, Frank Schmidt, Jan~Hendrik Metzen, and J~Zico Kolter.
\newblock Scaling provable adversarial defenses.
\newblock In \emph{NeurIPS}, 2018.

\bibitem[Wu et~al.(2015)Wu, Song, Khosla, Yu, Zhang, Tang, and Xiao]{wu20153d}
Zhirong Wu, Shuran Song, Aditya Khosla, Fisher Yu, Linguang Zhang, Xiaoou Tang,
  and Jianxiong Xiao.
\newblock 3d shapenets: A deep representation for volumetric shapes.
\newblock In \emph{Proceedings of the IEEE conference on computer vision and
  pattern recognition}, pp.\  1912--1920, 2015.

\bibitem[Xie et~al.(2018)Xie, Wang, Zhang, Ren, and Yuille]{xie2018mitigating}
Cihang Xie, Jianyu Wang, Zhishuai Zhang, Zhou Ren, and Alan Yuille.
\newblock Mitigating adversarial effects through randomization.
\newblock In \emph{ICLR}, 2018.

\bibitem[Yang et~al.(2020)Yang, Duan, Hu, Salman, Razenshteyn, and
  Li]{yang2020randomized}
Greg Yang, Tony Duan, Edward Hu, Hadi Salman, Ilya Razenshteyn, and Jerry Li.
\newblock Randomized smoothing of all shapes and sizes.
\newblock In \emph{ICML}, 2020.

\bibitem[Zhai et~al.(2020)Zhai, Dan, He, Zhang, Gong, Ravikumar, Hsieh, and
  Wang]{zhai2020macer}
Runtian Zhai, Chen Dan, Di~He, Huan Zhang, Boqing Gong, Pradeep Ravikumar,
  Cho-Jui Hsieh, and Liwei Wang.
\newblock Macer: Attack-free and scalable robust training via maximizing
  certified radius.
\newblock In \emph{ICLR}, 2020.

\bibitem[Zhang et~al.(2020)Zhang, Ye, Gong, Zhu, and Liu]{zhang2020black}
Dinghuai Zhang, Mao Ye, Chengyue Gong, Zhanxing Zhu, and Qiang Liu.
\newblock Black-box certification with randomized smoothing: A functional
  optimization based framework.
\newblock \emph{arXiv preprint arXiv:2002.09169}, 2020.

\bibitem[Zhang et~al.(2018)Zhang, Weng, Chen, Hsieh, and
  Daniel]{zhang2018efficient}
Huan Zhang, Tsui-Wei Weng, Pin-Yu Chen, Cho-Jui Hsieh, and Luca Daniel.
\newblock Efficient neural network robustness certification with general
  activation functions.
\newblock In \emph{NeurIPS}, 2018.

\bibitem[Zhang et~al.(2021)Zhang, Jia, Wang, and Gong]{zhang2021backdoor}
Zaixi Zhang, Jinyuan Jia, Binghui Wang, and Neil~Zhenqiang Gong.
\newblock Backdoor attacks to graph neural networks.
\newblock In \emph{SACMAT}, 2021.

\bibitem[Zheng et~al.(2020)Zheng, Wang, Li, and Xu]{zheng2020towards}
Tianhang Zheng, Di~Wang, Baochun Li, and Jinhui Xu.
\newblock Towards assessment of randomized mechanisms for certifying
  adversarial robustness.
\newblock \emph{arXiv preprint arXiv:2005.07347}, 2020.

\end{thebibliography}
\bibliographystyle{iclr2022_conference}

\newpage
\appendix
\section{Proof of Theorem~\ref{theorem_of_certified_robustness}}
\label{proof_of_theorem_of_certified_robustness}

We define the following two random variables:
\begin{align}
    U = h(\mathbf{x},e), V = h(\mathbf{x}+\delta,e). 
\end{align}
where $U$ and $V$ denote the ablated inputs derived from $\mathbf{x} $ and its perturbed version $\mathbf{x}+\delta$ with parameter $e$, respectively. We use $\mathcal{S}$ to denote the domain space of $U$ and $V$.

Our proof is based on the Neyman-Pearson Lemma~\citep{neyman1933ix}, and we present it as follows: 
\begin{lemma}[Neyman-Pearson Lemma]
\label{lemma_np}
Suppose $U$ and $V$ are two random variables in the space $\mathcal{S}$ with probability distributions $\rho_{u}$ and $\rho_{v}$, respectively. Let $F:\mathcal{S}\xrightarrow{} \{0,1\}$ be a random or deterministic function. Then, we have the following:  
\begin{itemize}
\item If $Z_1=\{\mathbf{s}\in \mathcal{S}:\rho_{u}(\mathbf{s}) > \mu \cdot \rho_{v}(\mathbf{s})  \}$ and $Z_2=\{\mathbf{s}\in \mathcal{S}:\rho_{u}(\mathbf{s}) = \mu \cdot \rho_{v}(\mathbf{s})  \}$ for some $\mu  > 0$. Let $Z=Z_1\cup Z_3$, where $Z_3 \subseteq Z_2$. If we have $\text{Pr}(F(U)=1)\geq \text{Pr}(U\in Z)$, then $\text{Pr}(F(V)=1)\geq \text{Pr}(V\in Z)$.

\item If $Z_1=\{\mathbf{s}\in \mathcal{S}:\rho_{u}(\mathbf{s}) < \mu \cdot \rho_{v}(\mathbf{s})  \}$ and $Z_2=\{\mathbf{s}\in \mathcal{S}:\rho_{u}(\mathbf{s}) =\mu \cdot \rho_{v}(\mathbf{s})  \}$ for some $\mu  > 0$. Let $Z=Z_1\cup Z_3$, where $Z_3 \subseteq Z_2$. If we have $\text{Pr}(F(U)=1)\leq \text{Pr}(U\in Z)$, then $\text{Pr}(F(V)=1)\leq \text{Pr}(V\in Z)$.
\end{itemize}
\end{lemma}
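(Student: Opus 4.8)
The plan is to prove this as the classical Neyman--Pearson argument, adapted to the (discrete) ablation space $\mathcal{S}$ and to a possibly randomized function $F$. First I would replace the random function $F$ by its acceptance probability $\phi(\mathbf{s}) \triangleq \text{Pr}(F(\mathbf{s})=1) \in [0,1]$, and encode the region $Z$ by its indicator $\psi(\mathbf{s}) \triangleq \mathbb{I}(\mathbf{s}\in Z) \in \{0,1\}$. Since $\mathcal{S}$ is discrete, all four quantities in the statement then become sums against the two probability mass functions: $\text{Pr}(F(U)=1)=\sum_{\mathbf{s}\in\mathcal{S}}\rho_u(\mathbf{s})\phi(\mathbf{s})$ and $\text{Pr}(U\in Z)=\sum_{\mathbf{s}\in\mathcal{S}}\rho_u(\mathbf{s})\psi(\mathbf{s})$, with the analogous expressions in $\rho_v$ for $V$. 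This reduction converts the probabilistic claim into an inequality between weighted sums, which is exactly the form the Neyman--Pearson trick handles.

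The core of the proof is a single term-by-term sign argument. For the first bullet I would study the quantity $\sum_{\mathbf{s}\in\mathcal{S}}\bigl(\mu\rho_v(\mathbf{s})-\rho_u(\mathbf{s})\bigr)\bigl(\phi(\mathbf{s})-\psi(\mathbf{s})\bigr)$ and show that each summand is nonnegative. The key is the construction $Z=Z_1\cup Z_3$ with $Z_1=\{\mathbf{s}:\rho_u(\mathbf{s})>\mu\rho_v(\mathbf{s})\}$ and $Z_3\subseteq Z_2=\{\mathbf{s}:\rho_u(\mathbf{s})=\mu\rho_v(\mathbf{s})\}$: it forces $\rho_u(\mathbf{s})\geq\mu\rho_v(\mathbf{s})$ for every $\mathbf{s}\in Z$ and $\rho_u(\mathbf{s})\leq\mu\rho_v(\mathbf{s})$ for every $\mathbf{s}\notin Z$. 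When $\mathbf{s}\in Z$ we have $\psi(\mathbf{s})=1$, so $\phi(\mathbf{s})-\psi(\mathbf{s})\leq 0$ while $\mu\rho_v(\mathbf{s})-\rho_u(\mathbf{s})\leq 0$; when $\mathbf{s}\notin Z$ we have $\psi(\mathbf{s})=0$, so $\phi(\mathbf{s})-\psi(\mathbf{s})\geq 0$ while $\mu\rho_v(\mathbf{s})-\rho_u(\mathbf{s})\geq 0$. In both cases the product of the two factors is nonnegative, so the whole sum is nonnegative.

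Rearranging that inequality gives $\mu\sum_{\mathbf{s}}\rho_v(\mathbf{s})\bigl(\phi(\mathbf{s})-\psi(\mathbf{s})\bigr)\geq\sum_{\mathbf{s}}\rho_u(\mathbf{s})\bigl(\phi(\mathbf{s})-\psi(\mathbf{s})\bigr)$. The hypothesis $\text{Pr}(F(U)=1)\geq\text{Pr}(U\in Z)$ says exactly that the right-hand side is nonnegative, hence so is the left-hand side; dividing by $\mu>0$ yields $\text{Pr}(F(V)=1)\geq\text{Pr}(V\in Z)$, which is the desired conclusion. For the second bullet I would run the identical argument with all inequalities reversed: there $Z_1=\{\mathbf{s}:\rho_u(\mathbf{s})<\mu\rho_v(\mathbf{s})\}$ forces $\rho_u\leq\mu\rho_v$ on $Z$ and $\rho_u\geq\mu\rho_v$ off $Z$, so each summand of the same sum is now nonpositive, and the hypothesis $\text{Pr}(F(U)=1)\leq\text{Pr}(U\in Z)$ propagates to $\text{Pr}(F(V)=1)\leq\text{Pr}(V\in Z)$.

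I do not expect a serious obstacle; the only points requiring care are bookkeeping rather than ideas. The first is handling the randomized-$F$ case cleanly by working with $\phi=\text{Pr}(F=1)$ throughout instead of a $\{0,1\}$-valued test, so that the argument covers the deterministic and random settings uniformly. The second is being explicit that $Z_3\subseteq Z_2$ contributes to $Z$ only through points where $\rho_u=\mu\rho_v$, which is precisely why the boundary set can be split off without breaking the two one-sided inequalities $\rho_u\geq\mu\rho_v$ on $Z$ and $\rho_u\leq\mu\rho_v$ off $Z$. With those two conventions fixed, the term-by-term sign check delivers both bullets immediately.
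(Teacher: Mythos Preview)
Your proposal is correct and is essentially the same Neyman--Pearson argument as the paper's proof: the paper also reduces to the acceptance probabilities $F(1|\mathbf{s})$, splits $\mathcal{S}$ into $Z$ and $Z^c$, and uses the pointwise inequalities $\rho_u\geq\mu\rho_v$ on $Z$ and $\rho_u\leq\mu\rho_v$ on $Z^c$ to compare $\text{Pr}(F(V)=1)-\text{Pr}(V\in Z)$ with $\tfrac{1}{\mu}\bigl(\text{Pr}(F(U)=1)-\text{Pr}(U\in Z)\bigr)$. Your packaging via the bilinear sum $\sum(\mu\rho_v-\rho_u)(\phi-\psi)$ is a slightly cleaner bookkeeping of the same term-by-term sign check.
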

\begin{proof}
We show the proof of the first part, and the second part can be proved similarly. For simplicity, we use $F(1|\mathbf{s})$ and $F(0|\mathbf{s})$ to denote the conditional probabilities that $F(\mathbf{s})=0$ and $F(\mathbf{s})=1$, respectively. We use $Z^c$ to denote the complement of $Z$, i.e., $Z^{c} = \mathcal{S}\setminus Z$. We have the following: 
\begin{align}
 &   \text{Pr}(F(V)=1) - \text{Pr}(V\in Z) \\
 =& \sum_{\mathbf{s}\in \mathcal{S}}F(1|\mathbf{s})\cdot\rho_{v}(\mathbf{s})- \sum_{s\in Z}\rho_{v}(\mathbf{s})\\
 =& \sum_{s\in Z^c}F(1|\mathbf{s})\cdot\rho_{v}(\mathbf{s})+ \sum_{s\in Z}F(1|\mathbf{s})\cdot\rho_{v}(\mathbf{s})-  \sum_{s\in Z}F(1|\mathbf{s})\cdot\rho_{v}(\mathbf{s})-\sum_{s\in Z}F(0|\mathbf{s})\cdot\rho_{v}(\mathbf{s})\\
 \label{np_lemma_proof_key_1}
 =& \sum_{s\in Z^c}F(1|\mathbf{s})\cdot\rho_{v}(\mathbf{s})-\sum_{s\in Z}F(0|\mathbf{s})\cdot\rho_{v}(\mathbf{s})\\ 
 \label{np_lemma_proof_key_2}
 \geq & \frac{1}{\mu } \cdot ( \sum_{s\in Z^c}F(1|\mathbf{s})\cdot\rho_{u}(\mathbf{s})-\sum_{s\in Z}F(0|\mathbf{s})\cdot\rho_{u}(\mathbf{s}) )\\ 
 = & \frac{1}{\mu } \cdot ( \sum_{s\in Z^c}F(1|\mathbf{s})\cdot\rho_{u}(\mathbf{s})+ \sum_{s\in Z}F(1|\mathbf{s})\cdot\rho_{u}(\mathbf{s})-  \sum_{s\in Z}F(1|\mathbf{s})\cdot\rho_{u}(\mathbf{s})-\sum_{s\in Z}F(0|\mathbf{s})\cdot\rho_{u}(\mathbf{s}) ) \\
 = &  \frac{1}{\mu } \cdot (\sum_{s\in \mathcal{S}}F(1|\mathbf{s})\cdot\rho_{u}(\mathbf{s})- \sum_{s\in Z}\rho_{u}(\mathbf{s}) ) \\
 =&  \frac{1}{\mu } \cdot ( \text{Pr}(F(U)=1) - \text{Pr}(U\in Z)  ) \\
 \geq & 0. 
\end{align}
We obtain~(\ref{np_lemma_proof_key_2}) from~(\ref{np_lemma_proof_key_1}) because $\rho_{u}(\mathbf{s}) \geq \mu \cdot \rho_{v}(\mathbf{s}), \forall \mathbf{s} \in Z$ and  $\rho_{u}(\mathbf{s}) \leq \mu \cdot \rho_{v}(\mathbf{s}), \forall \mathbf{s} \in Z^c$. We have the last inequality because $\text{Pr}(F(U)=1)\geq \text{Pr}(U\in Z)$. 
\end{proof}
Next, we will derive our certified robustness guarantee. For simplicity, we denote $\Gamma=\{1,2,\cdots,c\}\setminus \{l\}$, i.e., $\Gamma$ denotes the set of all labels except $l$. We use $\Gamma_{k}$ to denote a set of $k$ labels in $\Gamma$.

{\bf Calibrating the lower and upper bounds:}  Recall that $p_l$ and $p_j, \forall j\neq l$ are integer multiple of $\frac{1}{{d \choose e}}$. Then, given the probability lower and upper bounds in Equation~(\ref{probability_condition}), we have the following:
\begin{align}
\label{probability_condition_advance_appendix}
\underline{p^{\prime}_l} \triangleq \frac{\lceil \underline{p_l}\cdot{d \choose e} \rceil}{{d \choose e}} \leq \text{Pr}(f(U)=l), \overline{p}^{\prime}_j \triangleq \frac{\lfloor \overline{p}_j \cdot{d \choose e} \rfloor}{{d \choose e}}  \geq \text{Pr}(f(U)=j), \forall j \neq l, 
\end{align}

{\bf Deriving a lower bound of $\text{Pr}(f(V)=l)$:}
We will derive a lower bound of the probability $\text{Pr}(f(V)=l)$. For simplicity, we define the following regions: 
\begin{align}
\label{definition_of_region}
    \mathcal{A} = \{\mathbf{s}\in \mathcal{S}| \mathbf{s} \preceq \mathbf{x}, \mathbf{s} \not\preceq \mathbf{x}+\delta \}, \mathcal{B}=\{\mathbf{s}\in \mathcal{S}| \mathbf{s} \not\preceq \mathbf{x}, \mathbf{s} \preceq \mathbf{x}+\delta \}, \mathcal{C}=\{\mathbf{s}\in \mathcal{S}| \mathbf{s} \preceq \mathbf{x}, \mathbf{s} \preceq \mathbf{x}+\delta \},  
\end{align}
where we say $\mathbf{s} \preceq \mathbf{x} $ if $\text{Pr}(h(\mathbf{x} ,e)=\mathbf{s})>0$, and we say $\mathbf{s} \not\preceq \mathbf{x} $ if $\text{Pr}(h(\mathbf{x} ,e)=\mathbf{s})=0$. Intuitively, the notations $\preceq$ and $\not\preceq$ mean that an ablated input can or cannot be derived from an input, respectively. For instance, region $\mathcal{A}$ contains ablated inputs that can be derived from $\mathbf{x}$ but cannot be derived from $\mathbf{x}+\delta$, region $\mathcal{B}$ contains ablated inputs that can be derived from $\mathbf{x}+\delta$ but cannot be derived from $\mathbf{x}$, and region $\mathcal{C}$ contains ablated inputs that can be derived from both $\mathbf{x}$ and $\mathbf{x}+\delta$. 
Suppose we have $r=||\delta||_{0}$. Then, the size of $\mathcal{C}$ would be ${d-r \choose e}$ since  $d-r$ features are the same for $\mathbf{x}$ and $\mathbf{x}+\delta$. Similarly, we know the size of $\mathcal{A}$ and $\mathcal{B}$ would be ${d \choose e}-{d-r \choose e}$.  Since we keep $e$ features randomly sampled from $\mathbf{x}$ or $\mathbf{x}+\delta$ without replacement and set the remaining features to a special value, we have the following probability mass functions: 
\begin{align}
 &   \text{Pr}(U = \mathbf{s})=
    \begin{cases}
     \frac{1}{{d \choose e}},& \text{ if }\mathbf{s} \in \mathcal{A}\cup \mathcal{C} \\
     0,& \text{otherwise}. 
    \end{cases} \\
  &  \text{Pr}(V = \mathbf{s})=
    \begin{cases}
    \frac{1}{{d \choose e}}, & \text{ if }\mathbf{s} \in \mathcal{B}\cup \mathcal{C} \\
     0, & \text{otherwise}. 
    \end{cases}
\end{align}
Since we know the size of $\mathcal{A}$, $\mathcal{B}$, and $\mathcal{C}$, as well as the probability mass functions of the random variables $U$ and $V$ in these regions, we have the following probabilities: 
\begin{align}
  &  \text{Pr}(U \in \mathcal{C})=\frac{{d-r \choose e}}{{d \choose e}},\text{Pr}(U \in \mathcal{A})=1-\frac{{d-r \choose e}}{{d \choose e}},\text{Pr}(U \in \mathcal{B})=0, \\
   & \text{Pr}(V \in \mathcal{C})=\frac{{d-r \choose e}}{{d \choose e}},\text{Pr}(V \in \mathcal{B})=1-\frac{{d-r \choose e}}{{d \choose e}},\text{Pr}(V \in \mathcal{A})=0. 
\end{align}
We consider the case of $\underline{p^{\prime}_l}\geq 1-\frac{{d-r \choose e}}{{d \choose e}}$. Note that we can do this because we aim to find a sufficient condition. 
We let $\mathcal{A}^{\prime} \subseteq \mathcal{C}$ such that it satisfies the following: 
\begin{align}
\label{condition_of_region_a_prime}
   \text{Pr}(U \in \mathcal{A}^{\prime}) = \underline{p^{\prime}_l}  - \text{Pr}(U \in \mathcal{A}). 
\end{align}
Given region $\mathcal{A}^{\prime}$, we construct the following region: 
\begin{align}
\label{definition_of_region_e}
    \mathcal{E}=\mathcal{A}^{\prime} \cup \mathcal{A}.
\end{align}
Then, we have the following probability based on Equation~(\ref{condition_of_region_a_prime}):
\begin{align}
\label{inequality_for_region_e}
  \text{Pr}(U \in \mathcal{E})=  \text{Pr}(U \in \mathcal{A}) + \text{Pr}(U \in \mathcal{A}^{\prime}) = \underline{p^{\prime}_l}.  
\end{align}
We define a binary function $F(\mathbf{s})=\mathbb{I}(f(\mathbf{s})=l)$. Then, we have the following: 
\begin{align}
\text{Pr}(F(U)=1)= \text{Pr}(f(U)=l) \geq \underline{p^{\prime}_l} =   \text{Pr}(U \in \mathcal{E}). 
\end{align}
The middle inequality is based on Equation~(\ref{probability_condition_advance_appendix}) and the right-hand equality is from Equation~(\ref{inequality_for_region_e}). 
Furthermore, we have $\text{Pr}(U=\mathbf{s})>1\cdot \text{Pr}(V=\mathbf{s})$ if and only if $\mathbf{s}\in \mathcal{A}$, and $\text{Pr}(U=\mathbf{s})=1\cdot \text{Pr}(V=\mathbf{s})$ if $\mathbf{s}\in \mathcal{A}^{\prime}$. Therefore, we can apply Lemma~\ref{lemma_np} and we have the following: 
\begin{align}
    \text{Pr}(F(V)=1)=\text{Pr}(f(V)=l) \geq \text{Pr}(V \in \mathcal{E}).
\end{align}
Therefore, we have the following lower bound for $\text{Pr}(f(V)=l)$: 
\begin{align}
    & \text{Pr}(V \in \mathcal{E}) \\
     =& \text{Pr}(V\in \mathcal{A}^{\prime}) + \text{Pr}(V \in \mathcal{A}) \\
     =& \text{Pr}(V \in \mathcal{A}^{\prime}) \\
 \label{label_l_lower_bound_inequality_1}
     =& \text{Pr}(U \in \mathcal{A}^{\prime}) \\
\label{label_l_lower_bound_inequality_2}
     = & \underline{p^{\prime}_l} - (1-\frac{{d-r \choose e}}{{d \choose e}}). 
\end{align}
Note that we have Equation~(\ref{label_l_lower_bound_inequality_2}) from~(\ref{label_l_lower_bound_inequality_1}) based on Equation~(\ref{condition_of_region_a_prime}). 

\myparatight{Deriving an upper bound of $\max_{\Gamma_k \subset \Gamma}\min_{j \in \Gamma_k}\text{Pr}(f(V)=j)$} 
 We use $\Lambda$ to denote an arbitrary subset of $\Gamma_{k}$, i.e., $\Lambda \subseteq \Gamma_{k}$. We denote $\overline{p}^{\prime}_{\Lambda}=\sum_{j \in \Lambda}\overline{p}^{\prime}_j$, which is the sum of the upper bound of the probability for the labels in $\Lambda$. We assume $\overline{p}^{\prime}_{\Lambda} \leq \text{Pr}(U \in \mathcal{C})$. We can make this assumption because we aim to find a sufficient condition. 
Given $\overline{p}^{\prime}_{\Lambda}$, we can find a region $\mathcal{\mathcal{H}}_{\Lambda} \subseteq \mathcal{C}$ such that we have the following: 
\begin{align}
\label{find_region_h_lambda}
  \overline{p}^{\prime}_{\Lambda} =  \text{Pr}(U \in \mathcal{H}_{\Lambda}). 
\end{align}
Given the region $\mathcal{H}_{\Lambda}$, we construct the following region: 
\begin{align}
    \mathcal{I}_{\Lambda}=\mathcal{H}_{\Lambda}\cup \mathcal{B}. 
\end{align}
Then, we have the following probability: 
\begin{align}
\label{probability_of_u_in_H_lambda}
   \text{Pr}(U \in \mathcal{I}_{\Lambda}) = \text{Pr}(U \in \mathcal{H}_{\Lambda}) + \text{Pr}(U \in \mathcal{B}) = \overline{p}^{\prime}_{\Lambda}. 
\end{align}
Furthermore, for any given $\Lambda$, we define a binary function $G(\mathbf{s})=\mathbb{I}(f(\mathbf{s})\in \Lambda)$.
Then,  we have the following: 
\begin{align}
    \text{Pr}(G(U)=1) = \text{Pr}(f(U) \in \Lambda) = \sum_{j \in \Lambda}\text{Pr}(f(U)=j) \leq \overline{p}^{\prime}_{\Lambda} = \text{Pr}(U \in \mathcal{I}_{\Lambda}). 
\end{align}
We have $\sum_{j \in \Lambda}\text{Pr}(f(U)=j) \leq \overline{p}^{\prime}_{\Lambda}$ based on Equation~(\ref{probability_condition_advance_appendix}) and we have rightmost equality from Equation~(\ref{probability_of_u_in_H_lambda}). 
Then, we can apply Lemma~\ref{lemma_np} and we have the following: 
\begin{align}
\label{probability_of_v_in_I_from_nplemma}
    \text{Pr}(G(V)=1)  \leq \text{Pr}(V \in \mathcal{I}_{\Lambda}). 
\end{align}
The value of $\text{Pr}(V \in \mathcal{I}_{\Lambda})$ can be computed as follows: 
\begin{align}
&\text{Pr}(V \in I_{\Lambda}) \\
=&  \text{Pr}(V \in \mathcal{H}_{\Lambda}) + \text{Pr}(V \in \mathcal{B}) \\
=&  \text{Pr}(U \in \mathcal{H}_{\Lambda}) + (1-\frac{{d-r \choose e}}{{d \choose e}}) \\
= & \overline{p}^{\prime}_{\Lambda}  + (1-\frac{{d-r \choose e}}{{d \choose e}}), 
\end{align}
where the last equality is from Equation~(\ref{find_region_h_lambda}). 
Therefore, we have the following: 
\begin{align}
& \sum_{j \in \Lambda}\text{Pr}(f(V)=j) \\
=& \text{Pr}(f(V)\in \Lambda) \\
=& \text{Pr}(G(V)=1) \\
\leq & \text{Pr}(V \in \mathcal{I}_{\Lambda}) \\
= &  \overline{p}_{\Lambda} + (1-\frac{{d-r \choose e}}{{d \choose e}}). 
\end{align}
Moreover, we have the following: 
\begin{align}
    \min_{j \in \Gamma_{k}}\text{Pr}(f(V)=j) \leq \min_{j \in \Lambda}\text{Pr}(f(V)=j) \leq \frac{ \sum_{j \in \Lambda}\text{Pr}(f(V)=j)}{|\Lambda|} = \frac{\text{Pr}(f(V)\in \Lambda)}{|\Lambda|}. 
\end{align}
We have the leftmost inequality because $\Lambda \subseteq \Gamma_{k}$, and we have the middle inequality because the smallest value in a set is no larger than the average value of the set. 
Taking all possible $\Lambda$ into consideration and we have the following: 
\begin{align}
    & \min_{j \in \Gamma_{k}}\text{Pr}(f(V)=j) \\
    \leq & \min_{\Lambda \subseteq \Gamma_{k}} \frac{\text{Pr}(f(V)\in \Lambda)}{|\Lambda|} \\
    =& \min_{t=1}^{k} \min_{\Lambda \subseteq \Gamma_{k}, |\Lambda|=t}\frac{\text{Pr}(f(V)\in \Lambda)}{|\Lambda|} \\
    =& \min_{t=1}^{k} \frac{\text{Pr}(f(V)\in \Upsilon_t)}{t} \\
    \leq& \min_{t=1}^{k} \frac{\overline{p}^{\prime}_{\Upsilon_t} + (1-\frac{{d-r \choose e}}{{d \choose e}})}{t},
\end{align}
where $\Upsilon_t$ is the set of $t$ labels in $\Gamma_k$ whose probability upper bounds are the smallest, where ties are broken uniformly at random.  
 The upper bound of $\text{Pr}(f(V)\in \Upsilon_t)$ is increasing as  $\overline{p}^{\prime}_{\Upsilon_t}$ increases. Therefore, the upper bound of $\frac{\text{Pr}(f(V)\in \Upsilon_t)}{t}$ reaches the maximum value when $\Gamma_k=\{a_1,a_2,\cdots,a_k\}$, i.e., $\Gamma_k$ is the set of labels in $\Gamma$ with the largest probability upper bounds. In other words, we have the following: 
\begin{align}
    \max_{\Gamma_k \subset \Gamma}\min_{j \in \Gamma_k}\text{Pr}(f(V)=j) \leq \min_{t=1}^{k} \frac{\overline{p}^{\prime}_{\Upsilon_t} + (1-\frac{{d-r \choose e}}{{d \choose e}})}{t},
\end{align}
where $\Upsilon_t =\{a_1,a_2,\cdots,a_t\}$. 

\myparatight{Deriving the certified radius} 
Our goal is to make $\text{Pr}(f(V)=l) > \max_{\Gamma_k \subset \Gamma}\min_{j \in \Gamma_k}\text{Pr}(f(V)=j)$. Therefore, it is  sufficient to satisfy the following: 
\begin{align}
    \underline{p^{\prime}_l}  - (1-\frac{{d-r \choose e}}{{d \choose e}}) > \min_{t=1}^{k} \frac{\overline{p}^{\prime}_{\Upsilon_t}  + (1-\frac{{d-r \choose e}}{{d \choose e}})}{t},
\end{align}
where $\Upsilon_t =\{a_1,a_2,\cdots,a_t\}$. Therefore, we can find the maximum $r$ that satisfies the above condition. Formally, we can solve the following optimization problem to find $r_l$: 
\begin{align}
    r_l &= \argmax_{r} r \\
        \label{optimization_problem_in_the_proof}
    & \text{s.t. } \underline{p^{\prime}_l}  -(1-\frac{{d-r \choose e}}{{d \choose e}}) > \min_{t} \frac{\overline{p}^{\prime}_{\Upsilon_t} +(1-\frac{{d-r \choose e}}{{d \choose e}})}{t}, 
\end{align}
where $\Upsilon_t =\{a_1,a_2,\cdots,a_t\}$. Note that we make two assumptions in our derivation, i.e., $\underline{p^{\prime}_l} \geq (1-\frac{{d-r \choose e}}{{d \choose e}})$ and $\overline{p}^{\prime}_{\Upsilon_t}\leq \frac{{d-r \choose e}}{{d \choose e}}$. In particular, when Equation~(\ref{optimization_problem_in_the_proof}) is satisfied, we must have $\underline{p^{\prime}_l} \geq (1-\frac{{d-r \choose e}}{{d \choose e}})$ since the left-hand side of Equation~(\ref{optimization_problem_in_the_proof}) is non-negative. In addition, we have $\underline{p^{\prime}_l} + \overline{p}^{\prime}_{\Upsilon_t} \leq 1$ in practice. 
Therefore, we have $\overline{p}^{\prime}_{\Upsilon_t} \leq 1 - \underline{p^{\prime}_l}\leq \frac{{d-r \choose e}}{{d \choose e}}$.  

\myparatight{Technical differences with~\citet{jia2020certified}}
Our technical contribution in proving the theorem is the construction of new discrete regions such that the Neyman-Pearson Lemma can be used. Our proof of Theorem 1 has the following differences with Jia et al.  First, the construct of the regions $\mathcal{A}/\mathcal{B}/\mathcal{C}$ (Eq 18) are different from Jia et al. due to the discrete space. Second,  deriving the lower bound of $\textbf{Pr}(f(V)=l)$ faces two new challenges in our case. The first challenge is that we need to find two regions $\mathcal{A}$ and $\mathcal{A}'$ while Jia et al.  just need to find one region. The second challenge is how to find these regions. To address the challenge, we first take into consideration of whether $\mathcal{A}'$ exists or not. If $\mathcal{A}'$ exists, we need to shrink region $\mathcal{A}'$ because our space is discrete. These two challenges do not exist in the continuous case considered by Jia et al. Third, similar to deriving the lower bound of $\textbf{Pr}(f(V)=l)$, our work is also different from Jia et al. at deriving the upper bound of $\max_{\Gamma_k \in \Gamma}\min_{j \in \Gamma_k}\textbf{Pr}(f(V)=j)$.

\section{Proof of Theorem~\ref{theorem_tightness_of_bound}}
\label{proof_of_tightness_of_bound}
We consider two scenarios: $k=1$ and $k\neq 1$. In particular, we first consider the scenario where $k=1$. We have $\Gamma_{1} = \{a_1\}$ when $k=1$. 
We consider two cases. 

\myparatight{Case I}In this case, we consider $\underline{p^{\prime}_l} < (1-\frac{{d-r_l-1 \choose e}}{{d \choose e}} )$. We let $\mathcal{A}_l \subseteq \mathcal{A}$ be the region that satisfies the following: 
\begin{align}
 \underline{p^{\prime}_l} =    \text{Pr}(U \in \mathcal{A}_l ). 
\end{align}
We can find such region because $ \underline{p^{\prime}_l}$ is an integer multiple of $ \frac{1}{{d \choose e}}$. 
We let $\mathcal{D}_l =  \mathcal{A}_{l}$ and we have the following:
\begin{align}
\underline{p^{\prime}_l} =  \text{Pr}(U \in \mathcal{D}_l), \text{Pr}(V \in \mathcal{D}_l) = 0. 
\end{align}
Then, we can divide the remaining region $(\mathcal{A}\cup \mathcal{C})\setminus \mathcal{D}_l$ into $c-1$ disjoint regions such that we have the following:
\begin{align}
\forall j \in \{1,2,\cdots,c\}\setminus \{l\}, \text{Pr}(U \in \mathcal{D}_j ) \leq \overline{p}^{\prime}_{j}.
\end{align}
We can find these regions because we have $\underline{p^{\prime}_l} + \sum_{s \neq l} \overline{p}^{\prime}_s \geq 1$.
Moreover, we have the following: 
\begin{align}
    \forall j \in \{1,2,\cdots,c\}\setminus \{l\}, \text{Pr}(V \in \mathcal{D}_j ) \geq 0. 
\end{align}
 Given these regions, we construct the following base classifier: 
\begin{align}
    f^{*}(\mathbf{z}) = j, \text{ if } \mathbf{z} \in \mathcal{D}_j. 
\end{align}
Note that $f^{*}$ is well defined and is consistent with Equation~(\ref{probability_condition}). It is easy to see that label $l$ is not the predicted label by the corresponding smoothed classifier $g^{*}$ when  $\lnorm{\delta}_0 > r_l$. 

\myparatight{Case II} In this case, we consider $\underline{p^{\prime}_l} \geq (1-\frac{{d-r_l-1 \choose e}}{{d \choose e}} )$. Since $r_l$ is the maximum value that satisfies Equation~(\ref{equation_2_solve_to_find_rl}), we have the following condition when $\lnorm{\delta}_{0}=r_l + 1$: 
\begin{align}
\label{condition_of_r_l_plus_1_k_1}
    \underline{p^{\prime}_l} -(1-\frac{{d-r_l -1 \choose e}}{{d \choose e}}) \leq
    \overline{p}^{\prime}_{a_1}+ (1-\frac{{d-r_l-1 \choose e}}{{d \choose e}}).
\end{align}
We let $\mathcal{A}_{l} = \mathcal{A}$ and we can find $\mathcal{C}_l \in \mathcal{C}$ such that the following equation holds: 
\begin{align}
\label{construct_cl_condition_k_1}
 \text{Pr}(U \in \mathcal{C}_l) =  \underline{p^{\prime}_l}-(1-\frac{{d-r_l-1 \choose e}}{{d \choose e}} ). 
\end{align}
Then, we let $\mathcal{D}_l = \mathcal{C}_l \cup \mathcal{A}_{l}$ and we have the following:
\begin{align}
\label{equation_of_u_for_label_l_k_1}
\text{Pr}(U \in \mathcal{D}_l) = \underline{p^{\prime}_l}.
\end{align}
Furthermore, we have the following: 
\begin{align}
    & \text{Pr}(V \in \mathcal{D}_l) \\
    =&  \text{Pr}(V \in \mathcal{C}_l) +  \text{Pr}(V \in \mathcal{A}_l) \\
    =&  \text{Pr}(U \in \mathcal{C}_l) +  0 \\
    = & \underline{p^{\prime}_l}-(1-\frac{{d-r_l-1 \choose e}}{{d \choose e}} ),  
\end{align}
where the last equality is from Equation~(\ref{construct_cl_condition_k_1}). Since we have $\underline{p^{\prime}_l}  + \overline{p}^{\prime}_{a_1} \leq 1$,  we can find region $\mathcal{C}_{a_1} \in \mathcal{C} \setminus \mathcal{C}_{l}$ such that we have the following:
\begin{align}
\label{equation_of_c_j_part_k_1}
   \text{Pr}(U \in \mathcal{C}_{a_1}) = \overline{p}^{\prime}_{a_1} .
\end{align}
We define $\mathcal{D}_{a_1} = \mathcal{C}_{a_1}\cup \mathcal{B}$. Then, we have $\text{Pr}(U \in \mathcal{D}_{a_1})=\text{Pr}(U \in \mathcal{C}_{a_1})=\overline{p}^{\prime}_{a_1}$. Similarly, we have the following: 
\begin{align}
    & \text{Pr}(V \in \mathcal{D}_{a_1}) \\
    =& \text{Pr}(V \in \mathcal{C}_{a_1}) + \text{Pr}(V \in \mathcal{B}) \\
    =& \overline{p}^{\prime}_{a_1} + (1-\frac{{d-r_l-1 \choose e}}{{d \choose e}} ).
\end{align}
Finally, we can divide the remaining  region $\mathcal{A}\cup \mathcal{C}\setminus (\mathcal{D}_l \cup \mathcal{C}_{a_1})$ into $c-2$ disjoint regions such that we have the following: 
\begin{align}
 \forall j \in \{1,2,\cdots,c\}\setminus (\{l\}\cup \{a_1\}),   \text{Pr}(U \in \mathcal{D}_j) \leq \overline{p}^{\prime}_j . 
\end{align}
We can find these region because $\underline{p^{\prime}_l}   +\sum_{s \neq l} \overline{p}^{\prime}_{s} \geq 1$. Given these regions, we construct the following base classifier: 
\begin{align}
    f^{*}(\mathbf{z}) = j, \text{ if } \mathbf{z} \in \mathcal{D}_j. 
\end{align}
Note that $f^{*}$ is well defined and is consistent with Equation~(\ref{probability_condition}). Next, we show that label $l$ is not in the top-$1$ predicted labels by the smoothed classifier or there exist ties when the $\ell_0$ perturbation is larger than $r_{l}$.  In particular, we have the following: 
\begin{align}
& \text{Pr}(f^{*}(V)=a_1 | \lnorm{\delta}_{0}> r_l) \\
=& \text{Pr}(V \in \mathcal{D}_{a_1} | \lnorm{\delta}_{0}> r_l) \\
\label{final_derivation_1_k_1}
= & \overline{p}^{\prime}_{a_1} + (1-\frac{{d-r_l-1 \choose e}}{{d \choose e}} ) \\
\label{final_derivation_2_k_1}
\geq & \underline{p^{\prime}_l} -(1-\frac{{d-r-1 \choose e}}{{d \choose e}} ) \\
= & \text{Pr}(V \in \mathcal{D}_l | \lnorm{\delta}_{0}> r_l ) \\
=& \text{Pr}(f^{*}(V)=l|  \lnorm{\delta}_{0}> r_l ). 
\end{align}
We have Equation~(\ref{final_derivation_2_k_1}) from~(\ref{final_derivation_1_k_1}) based on Equation~(\ref{condition_of_r_l_plus_1_k_1}). Therefore, the label $l$ is not predicted by the corresponding smoothed classifier $g^{*}$ or there exist ties. Combining the two cases, we reach the conclusion.

Next, we will show our bound is almost tight when $k\neq 1$. In particular, we will show we can construct a classifier $f^{*}$ such that the label $l$ is not among the top-$k$ predicted labels or there exist ties when the adversarial perturbation is larger than $r_l + 1$. Similarly, we consider two cases.

\myparatight{Case I}In this case, we consider $\underline{p^{\prime}_l} < (1-\frac{{d-r_l-2 \choose e}}{{d \choose e}} )$. We let $\mathcal{A}_l \subseteq \mathcal{A}$ be the region that satisfies the following: 
\begin{align}
 \underline{p^{\prime}_l} =    \text{Pr}(U \in \mathcal{A}_l ). 
\end{align}
We can find such region because $ \underline{p^{\prime}_l}$ is an integer multiply of $\nu = \frac{1}{{d \choose e}}$. 
We let $\mathcal{D}_l =  \mathcal{A}_{l}$ and we have the following:
\begin{align}
\underline{p^{\prime}_l} =  \text{Pr}(U \in \mathcal{D}_l), \text{Pr}(V \in \mathcal{D}_l) = 0. 
\end{align}
Then, we can divide the remaining region $(\mathcal{A}\cup \mathcal{C})\setminus \mathcal{D}_l$ into $c-1$ disjoint regions such that we have the following:
\begin{align}
\forall j \in \{1,2,\cdots,c\}\setminus \{l\}, \text{Pr}(U \in \mathcal{D}_j ) \leq \overline{p}^{\prime}_{j}.
\end{align}
We can find these regions because we have $\underline{p^{\prime}_l} + \sum_{s \neq l} \overline{p}^{\prime}_s \geq 1$.
Moreover, we have the following: 
\begin{align}
    \forall j \in \{1,2,\cdots,c\}\setminus \{l\}, \text{Pr}(V \in \mathcal{D}_j ) \geq 0. 
\end{align}
 Given these regions, we construct the following base classifier: 
\begin{align}
    f^{*}(\mathbf{z}) = j, \text{ if } \mathbf{z} \in \mathcal{D}_j. 
\end{align}
Note that $f^{*}$ is well defined and is consistent with Equation~(\ref{probability_condition}).  It is easy to see that label $l$ is not among the top-$k$ predicted labels or there exist ties when $\lnorm{\delta}_0 > r_l +1$.

\myparatight{Case II} In this case, we consider $\underline{p^{\prime}_l} \geq (1-\frac{{d-r_l-2 \choose e}}{{d \choose e}} )$. For simplicity, we denote the following quantity:
\begin{align}
\nu = \frac{1}{{d \choose e}}. 
\end{align}
Since $r_l$ is the maximum value that satisfies Equation~(\ref{equation_2_solve_to_find_rl}), we have the following condition: 
\begin{align}
\label{condition_of_r_l_plus_1}
    \underline{p^{\prime}_l} -(1-\frac{{d-r_l -1 \choose e}}{{d \choose e}}) \leq \min_{t} \frac{\overline{p}^{\prime}_{\Upsilon_t} + (1-\frac{{d-r_l-1 \choose e}}{{d \choose e}})}{t}.
\end{align}
In other words, the left-hand side of Equation~(\ref{equation_2_solve_to_find_rl}) is no larger than its right-hand side when $r= r_l + 1$. 
Based on the recurrence relation of the binomial coefficient, we have the following: 
\begin{align}
{d-r_l-1 \choose e} = {d-r_l-2 \choose e}+{d-r_l-2 \choose e-1}. 
\end{align}
Combining with the condition ${d-r_l-2 \choose e-1} \geq 1$, we have the following: 
\begin{align}
& \underline{p^{\prime}_l}  -(1-\frac{{d-r_l-1 \choose e}}{{d \choose e}}) \\
= & \underline{p^{\prime}_l}  -(1-\frac{{d-r_l-2 \choose e}}{{d \choose e}} - \frac{{d-r_l -2 \choose e-1}}{{d \choose e}}) \\
=& \underline{p^{\prime}_l}  + \frac{{d-r_l-2 \choose e-1}}{{d \choose e}} -(1-\frac{{d-r_l-2 \choose e}}{{d \choose e}} ) \\
\geq &  \underline{p^{\prime}_l} + \nu -(1-\frac{{d-r_l-2 \choose e}}{{d \choose e}} ). 
\end{align}
Similarly, we have the following: 
\begin{align}
& \min_{t} \frac{\overline{p}^{\prime}_{\Upsilon_t}+(1-\frac{{d-r_l-1 \choose e}}{{d \choose e}})}{t} \\ 
= & \min_{t} \frac{\overline{p}^{\prime}_{\Upsilon_t} - \frac{{d-r_l-2 \choose e-1}}{{d \choose e}} +(1-\frac{{d-r_l-2 \choose e}}{{d \choose e}})}{t} \\
\leq & \min_{t} \frac{\overline{p}^{\prime}_{\Upsilon_t} -  \nu+(1-\frac{{d-r_l-2 \choose e}}{{d \choose e}})}{t}. 
\end{align}
Then, based on Equation~(\ref{condition_of_r_l_plus_1}), we have the following: 
\begin{align}
& \underline{p^{\prime}_l} + \nu -(1-\frac{{d-r_l-2 \choose e}}{{d \choose e}} ) \leq \min_{t} \frac{\overline{p}^{\prime}_{\Upsilon_t} -  \nu+(1-\frac{{d-r_l-2 \choose e}}{{d \choose e}})}{t} \\
\label{derive_comparison_equation_0}
\Longleftrightarrow & \underline{p^{\prime}_l}  -(1-\frac{{d-r_l-2 \choose e}}{{d \choose e}} ) \leq \min_{t} \frac{\overline{p}^{\prime}_{\Upsilon_t} -  \nu -  t \cdot \nu+(1-\frac{{d-r_l-2 \choose e}}{{d \choose e}})}{t} \\
\label{derive_comparison_equation}
\Longrightarrow & \underline{p^{\prime}_l}  -(1-\frac{{d-r_l-2 \choose e}}{{d \choose e}} ) < \min_{t} \frac{\overline{p}^{\prime}_{\Upsilon_t} -  t \cdot \nu+(1-\frac{{d-r_l-2 \choose e}}{{d \choose e}})}{t}. 
\end{align}
For simplicity, we denote the following: 
\begin{align}
\label{equation_to_find_smallest_index}
    w = \argmin_{t=1}^{k} \frac{\overline{p}_{\Upsilon_t} -  t \cdot \nu+(1-\frac{{d-r_l-2 \choose e}}{{d \choose e}})}{t}, 
\end{align}
where ties are broken uniformly at random. Then, based on Equation~(\ref{derive_comparison_equation}), we have the following: 
\begin{align}
\label{condition_for_final_derivation}
    \underline{p^{\prime}_l}  -(1-\frac{{d-r_l-2 \choose e}}{{d \choose e}} ) <  \frac{\overline{p}^{\prime}_{\Upsilon_w} - w \cdot \nu+(1-\frac{{d-r_l-2 \choose e}}{{d \choose e}})}{w}.
\end{align}
Given Equation~(\ref{equation_to_find_smallest_index}), we have the following if $w < k$: 
\begin{align}
  & \frac{\overline{p}^{\prime}_{\Upsilon_{w+1}} - (w+1)\cdot \nu+(1-\frac{{d-r_l-2 \choose e}}{{d \choose e}})}{w+1} \geq  \frac{\overline{p}^{\prime}_{\Upsilon_w} - w\cdot \nu+(1-\frac{{d-r_l-2 \choose e}}{{d \choose e}})}{w} \\
\Longleftrightarrow   & \frac{\overline{p}^{\prime}_{\Upsilon_{w+1}} +(1-\frac{{d-r_l-2 \choose e}}{{d \choose e}})}{w+1} \geq  \frac{\overline{p}^{\prime}_{\Upsilon_w} +(1-\frac{{d-r_l-2 \choose e}}{{d \choose e}})}{w} \\
\Longleftrightarrow   & \overline{p}^{\prime}_{\Upsilon_{w+1}} +(1-\frac{{d-r_l-2 \choose e}}{{d \choose e}})\geq  (w+1 )\cdot \frac{\overline{p}^{\prime}_{\Upsilon_w} +(1-\frac{{d-r_l-2 \choose e}}{{d \choose e}})}{w} \\
\Longleftrightarrow & \overline{p}^{\prime}_{\Upsilon_{w+1}} - \overline{p}^{\prime}_{\Upsilon_{w}}  \geq  \frac{\overline{p}^{\prime}_{\Upsilon_w} +(1-\frac{{d-r_l-2 \choose e}}{{d \choose e}})}{w} \\
\label{equation_used_to_construct_region_first}
\Longleftrightarrow & \overline{p}^{\prime}_{a_{w+1}}  \geq  \frac{\overline{p}^{\prime}_{\Upsilon_w} +(1-\frac{{d-r_l-2 \choose e}}{{d \choose e}})}{w},
\end{align}
where $\Upsilon_{w}=\{a_1,a_2,\cdots,a_w\}$. Similarly, we have the following if $w>1$: 
\begin{align}
  & \frac{\overline{p}^{\prime}_{\Upsilon_{w-1}} - (w-1)\cdot \nu+(1-\frac{{d-r_l-2 \choose e}}{{d \choose e}})}{w-1} \geq  \frac{\overline{p}^{\prime}_{\Upsilon_w} - w \nu+(1-\frac{{d-r_l-2 \choose e}}{{d \choose e}})}{w} \\
\Longleftrightarrow   & \frac{\overline{p}^{\prime}_{\Upsilon_{w-1}}+(1-\frac{{d-r_l-2 \choose e}}{{d \choose e}})}{w-1} \geq  \frac{\overline{p}^{\prime}_{\Upsilon_w} +(1-\frac{{d-r_l-2 \choose e}}{{d \choose e}})}{w} \\
\Longleftrightarrow   & \overline{p}^{\prime}_{\Upsilon_{w-1}} +(1-\frac{{d-r_l-2 \choose e}}{{d \choose e}})\geq  (w-1 )\cdot \frac{\overline{p}^{\prime}_{\Upsilon_w} +(1-\frac{{d-r_l-2 \choose e}}{{d \choose e}})}{w} \\
\label{equation_used_to_construct_region}
\Longleftrightarrow & \overline{p}^{\prime}_{a_{w}} \leq  \frac{\overline{p}^{\prime}_{\Upsilon_w}  +(1-\frac{{d-r_l-2 \choose e}}{{d \choose e}})}{w}.
\end{align}
Note that the Equation~(\ref{equation_used_to_construct_region}) also holds when $w=1$. 
Next, we will show we can build a base classifier $f^{*}$ such that the label $l$ is not in the top-$k$ predicted labels or there exist ties when the adversarial perturbation is larger than $ r_{l} + 1$. Our proof relies on constructing disjoint regions for label $l$, $\Upsilon_{k}$, and $\{1,2,\cdots,c\}\setminus (\{l\}\cup \Upsilon_{k})$, respectively.

We let $\mathcal{A}_{l} = \mathcal{A}$ and we can find $\mathcal{C}_l \in \mathcal{C}$ such that the following equation holds: 
\begin{align}
\label{construct_cl_condition}
 \text{Pr}(U \in \mathcal{C}_l) =  \underline{p^{\prime}_l}-(1-\frac{{d-r_l-2 \choose e}}{{d \choose e}} ). 
\end{align}
Then, we let $\mathcal{D}_l = \mathcal{C}_l \cup \mathcal{A}_{l}$ and we have the following:
\begin{align}
\label{equation_of_u_for_label_l}
\text{Pr}(U \in \mathcal{D}_l) = \underline{p^{\prime}_l}.
\end{align}
Furthermore, we have the following: 
\begin{align}
    & \text{Pr}(V \in \mathcal{D}_l) \\
    =&  \text{Pr}(V \in \mathcal{C}_l) +  \text{Pr}(V \in \mathcal{A}_l) \\
    =&  \text{Pr}(U \in \mathcal{C}_l) +  0 \\
    = & \underline{p^{\prime}_l}-(1-\frac{{d-r_l-2 \choose e}}{{d \choose e}} ),  
\end{align}
where the last equality is from Equation~(\ref{construct_cl_condition}). 
For simplicity, we denote the following value: 
\begin{align}
\label{definition_of_tau}
    \tau = \frac{\overline{p}^{\prime}_{\Upsilon_w} + (1-\frac{{d-r_l-2 \choose e}}{{d \choose e}})}{w}. 
\end{align}
Next, we will construct the region for $\forall j \in \Upsilon_{w}$. Based on Equation~(\ref{construct_cl_condition}), we have the following:
\begin{align}
\label{measure_of_c_minus_cl_1}
 &  \text{Pr}(U \in \mathcal{C} \setminus \mathcal{C}_{l}) \\
 = &  \text{Pr}(U \in \mathcal{C} ) -  \text{Pr}(U \in  \mathcal{C}_{l}) \\
 = & (1-\frac{{d-r_l-2 \choose e}}{{d \choose e}} ) -(   \underline{p^{\prime}_l}-(1-\frac{{d-r_l-2 \choose e}}{{d \choose e}} ) ) \\
 \label{measure_of_c_minus_cl_2}
 = & 1 - \underline{p^{\prime}_l}. 
\end{align}
For $\forall j \in \Upsilon_{w}$, we can find disjoint region $\mathcal{C}_j \subseteq \mathcal{C} \setminus \mathcal{C}_{l}$ such that we have the following:
\begin{align}
\label{equation_of_c_j_part}
   \text{Pr}(U \in \mathcal{C}_j) = \overline{p}^{\prime}_{j} .
\end{align}
We can find these regions because the summation of the probability of $U$ in these regions is less than the probability of $U$ in $\mathcal{C} \setminus \mathcal{C}_{l}$, i.e., we have the following: 
\begin{align}
    \sum_{j \in \Upsilon_{w}} \overline{p}^{\prime}_j = \overline{p}^{\prime}_{\Upsilon_w} \leq 1 - \underline{p^{\prime}_l}   =   \text{Pr}(U \in \mathcal{C} \setminus \mathcal{C}_{l}), 
\end{align}
where the middle inequality is from the condition $\underline{p^{\prime}_l}  +  \sum_{s \in \Upsilon_{k}} \overline{p}^{\prime}_{s} \leq 1$, and the right equality is based on Equation~(\ref{measure_of_c_minus_cl_1}) - (\ref{measure_of_c_minus_cl_2}). 
Given these regions, we have the following:
\begin{align}
 \forall j \in \Upsilon_{w},   \text{Pr}(V \in \mathcal{C}_j) = \overline{p}^{\prime}_{j}. 
\end{align}
Based on Equation~(\ref{equation_used_to_construct_region}), definition of $\tau$ in Equation~(\ref{definition_of_tau}), and $\forall j \in \Upsilon_{w}, \overline{p}^{\prime}_j \leq \overline{p}^{\prime}_{a_w}$, we have the following: 
\begin{align}
\label{tau_bigger_than_overline_pj}
 \forall j \in \Upsilon_{w}, \overline{p}^{\prime}_{j} \leq \overline{p}^{\prime}_{a_w}  \leq \tau .  
\end{align}
Then, for $\forall j \in \Upsilon_{w}$, we can find disjoint region $\mathcal{B}_{j} \in \mathcal{B}$ such that we have the following: 
\begin{align}
\label{condition_of_b_j_for_label_upsilon_w}
 \tau -\nu - \overline{p}^{\prime}_j \leq  \text{Pr}(V \in \mathcal{B}_j) \leq  \tau - \overline{p}^{\prime}_j .
\end{align}
We can construct these regions for three reasons: 1) the value of $\tau - \overline{p}^{\prime}_j$ is no smaller than 0 based on Equation~(\ref{tau_bigger_than_overline_pj}), 2) $\forall j \in \Upsilon_{w} $, there exists a number in the range $[\tau - \nu - \overline{p}^{\prime}_j,\tau  - \overline{p}^{\prime}_j]$ that is an integer multiple of $\frac{1}{{d \choose e}}$, and 3) the summation of the probability of $V$ in these regions is no larger than the probability of $V$ in $\mathcal{B}$, i.e., we have the following: 
\begin{align}
& \sum_{j \in \Upsilon_{w}} \text{Pr}(V \in \mathcal{B}_j) \\
\leq &   \sum_{j \in \Upsilon_{w}} (\tau  - \overline{p}^{\prime}_j ) \\
= & \overline{p}^{\prime}_{\Upsilon_w}  + (1-\frac{{d-r_l-2 \choose e}}{{d \choose e}}) - \overline{p}^{\prime}_{\Upsilon_w} \\
\leq & (1-\frac{{d-r_l-2 \choose e}}{{d \choose e}}) \\
= &    \text{Pr}(V \in \mathcal{B}). 
\end{align}
For $\forall j \in \Upsilon_{w}$, we let $\mathcal{D}_j = \mathcal{C}_j \cup \mathcal{B}_j$. Then, we have the following: 
\begin{align}
   & \text{Pr}(V \in \mathcal{D}_j) \\
   \label{probability_of_upsilon_w_in_d_j_1}
=& \text{Pr}(V \in \mathcal{C}_j ) + \text{Pr}(V \in \mathcal{B}_j) \\
\label{probability_of_upsilon_w_in_d_j_2}
\geq & \overline{p}^{\prime}_j + \tau - \nu - \overline{p}^{\prime}_j  \\
=& \tau - \nu, 
\end{align}
where the Equation~(\ref{probability_of_upsilon_w_in_d_j_2}) from~(\ref{probability_of_upsilon_w_in_d_j_1}) is based on Equation~(\ref{equation_of_c_j_part}) and~(\ref{condition_of_b_j_for_label_upsilon_w}). 
Next, we will construct the regions for the labels in $\Upsilon_{k}\setminus \Upsilon_{w}$. In particular, for $\forall j \in \{a_{w+1},a_{w+2},\cdots, a_{k}\}$, we can find disjoint region $\mathcal{D}_{j} \in \mathcal{C}\setminus (\mathcal{C}_l \cup (\cup_{s \in \Upsilon_w}\mathcal{C}_{s}))$ such that we have the following: 
\begin{align}
 \text{Pr}(U \in \mathcal{D}_j) = \overline{p}^{\prime}_j . 
\end{align}
Note that we can find these regions because $\underline{p^{\prime}_l} + \sum_{s \in \Upsilon_{k}} \overline{p}^{\prime}_{s} \leq 1$. Similarly, we have the following for $\forall j \in \Upsilon_{k}\setminus \Upsilon_{w}$:
\begin{align}
      \text{Pr}(V \in \mathcal{D}_j) = \overline{p}^{\prime}_j  \geq \tau.
\end{align}
We have the left inequality because $\forall j \in \Upsilon_{k}\setminus \Upsilon_{w}, \overline{p}^{\prime}_j \geq \tau$ based on Equation~(\ref{equation_used_to_construct_region_first}). 
Finally, we can divide the remaining  region $\mathcal{D}_j  \subseteq \mathcal{C}\cup \mathcal{A}\setminus (\mathcal{D}_l \cup (\cup_{s \in \Upsilon_{k}}\mathcal{C}_{s}))$ into $c-k-1$ disjoint regions such that we have the following: 
\begin{align}
 \forall j \in \{1,2,\cdots,c\}\setminus (\{l\}\cup \Upsilon_{k}),   \text{Pr}(U \in \mathcal{D}_j) \leq \overline{p}^{\prime}_j . 
\end{align}
We can find these region because $\underline{p^{\prime}_l}   +\sum_{s \neq l} \overline{p}^{\prime}_{s} \geq 1$. Given these regions, we construct the following base classifier: 
\begin{align}
    f^{*}(\mathbf{z}) = j, \text{ if } \mathbf{z} \in \mathcal{D}_j. 
\end{align}
Note that $f^{*}$ is well defined and is consistent with Equation~(\ref{probability_condition}). Next, we show that label $l$ is not in the top-$k$ predicted labels by the smoothed classifier when the $\ell_0$ perturbation is larger than $r_{l} + 1$. In particular, for $\forall j \in \Upsilon_{k}$, we have the following: 
\begin{align}
& \text{Pr}(f^{*}(V)=j | \lnorm{\delta}_{0}> r_l +1) \\
=& \text{Pr}(V \in \mathcal{D}_j | \lnorm{\delta}_{0}> r_l +1) \\
\geq & \tau -  \nu \\
\label{final_derivation_1}
=&  \frac{\overline{p}^{\prime}_{\Upsilon_w} -  w \cdot \nu + (1-\frac{{d-r_l-2 \choose e}}{{d \choose e}})}{w} \\
\label{final_derivation_2}
> & \underline{p^{\prime}_l} -(1-\frac{{d-r-2 \choose e}}{{d \choose e}} ) \\
\geq & \text{Pr}(V \in \mathcal{D}_l | \lnorm{\delta}_{0}> r_l +1 ) \\
=& \text{Pr}(f^{*}(V)=l | \lnorm{\delta}_{0}> r_l +1). 
\end{align}
We have Equation~(\ref{final_derivation_2}) from~(\ref{final_derivation_1}) based on Equation~(\ref{condition_for_final_derivation}). Therefore, the label $l$ is not among the top-$k$ predicted labels by the corresponding smoothed classifier $g^{*}$. Combining the two cases, we reach the conclusion.

\myparatight{Takeaway} \CR{The key challenge in proving the tightness of certified robustness guarantee is that we are not able to find regions in the discrete space that satisfy certain conditions. The reason is that we cannot arbitrarily divide the discrete space into different regions. To address the challenge, we propose to relax the conditions in finding the regions to prove that the certified robustness guarantee is almost tight. The idea in our proof is very general and we hope our proof can inspire future research in proving the (almost) tightness for $\ell_0$-norm certified robustness guarantee. }

\section{Other applications}\CR{In this paper, we focus on image classification. However, our method is also applicable for other applications such as graph neural networks and 3D deep learning. For example, we conduct experiments for 3D deep learning. In particular, we evaluate our methods on the ModelNet40~\citep{wu20153d} benchmark dataset and use PointNet~\citep{qi2017pointnet} as the base classifier. We set $e=16$, $n=10,000$, and $\alpha=0.001$. When the number of modified points is 10, 20, 30, 40, and 50, the certified accuracies for top-$1$ prediction are 0.779, 0.743, 0.700, 0.649, and 0.570; and the certified accuracies for top-$3$ prediction are 0.901, 0.867, 0.829, 0.803, and 0.775. }

\section{Comparing with~\cite{chiang2020certified}} \CR{We also compare with our method with~\cite{chiang2020certified}. We note that the method in Chiang et al. is only applicable for top-$1$ prediction. We compare with the method on the CIFAR10 dataset for top-$1$ predictions using the publicly available code. The results are as follows. When the number of perturbed pixels is  1, 2, 3, 4, and 5, the certified accuracies for our method are respectively 0.746, 0.718, 0.690, 0.660, and 0.636. In contrast, the certified accuracies of Chiang et al. are 0.400, 0.369, 0.342, 0.312, 0.308. As the results show, our method is better than~\cite{chiang2020certified}. We note that our certified robustness guarantee is probabilistic while~\cite{chiang2020certified} can give a deterministic certified robustness guarantee.}

% You may include other additional sections here. 

\end{document}